\def\doi{7 (3:14) 2011}
\newcommand{\N}{\mathbbm{N}}
\newcommand{\betrag}[1]{\left\vert #1 \right\vert}
\newcommand{\st}{\mathbin |}
\newcommand{\GFtwo}{\mathbbm{F}_2}
\newcommand{\GFm}{\mathbbm{F}_m}
\newcommand{\GFp}{\mathbbm{F}_p}
\newcommand{\logn}{\mathop\mathrm{ln}}
\newcommand{\EF}{Eh\-ren\-feucht-Fraïssé}
\newcommand{\BPIFP}{\BP\IFP}
\newcommand{\Rescher}{\mathcal{J}}
\newcommand{\BPIFPJ}{\BPIFP(\Rescher)}
\newcommand{\BPFO}{\BP\FO}
\newcommand{\RFO}{\logic{RFO}}
\newcommand{\les}{\leqslant}
\renewcommand{\phi}{\varphi}
\newcommand{\bigmid}{\;\big|\;}
\newcommand{\logic}[1]{\textsf{\upshape #1}}
\newcommand{\LL}{\logic L}
\newcommand{\FO}{\logic{FO}}
\newcommand{\FOC}{\logic{FO+C}}
\newcommand{\SO}{\logic{SO}}
\newcommand{\MSO}{\logic{MSO}}
\newcommand{\TC}{\logic{TC}}
\newcommand{\DTC}{\logic{DTC}}
\newcommand{\TCC}{\logic{TC+C}}
\newcommand{\DTCC}{\logic{DTC+C}}
\newcommand{\LFP}{\logic{LFP}}
\newcommand{\IFP}{\logic{IFP}}
\newcommand{\IFPC}{\logic{IFP+C}}
\newcommand{\PFP}{\logic{PFP}}
\newcommand{\PFPC}{\logic{PFP+C}}
\newcommand{\Linf}{\LL^\omega_{\infty\omega}}
\newcommand{\Cinf}{\logic{C}^\omega_{\infty\omega}}
\newcommand{\Cinfs}{\logic{C}^s_{\infty\omega}}
\newcommand{\LK}{\logic K}
\newcommand{\PTIME}{\logic{P}}
\newcommand{\NP}{\logic{NP}}
\newcommand{\BPP}{\logic{BPP}}
\newcommand{\BP}{\logic{BP}}
\renewcommand{\P}{\logic{P}}
\newcommand{\CC}{\mathcal C}
\newcommand{\CO}{\mathcal O}
\newcommand{\CQ}{\mathcal Q}
\newcommand{\CR}{\mathcal R}
\newcommand{\CX}{\mathcal X}
\renewcommand{\section}{\@startsection
  {section}%
  {1}%
  {0mm}%
  {-0.6\baselineskip}%
  {0.5\baselineskip}%
  {\normalfont\large\bfseries}%
}
\renewcommand{\subsection}{\@startsection
  {subsection}%
  {2}%
  {0mm}%
  {-0.6\baselineskip}%
  {0.5\baselineskip}%
  {\normalfont\bfseries}%
}
\newcounter{erom}
\renewcommand{\theerom}{(\roman{erom})}
\begin{document}

\title{Randomisation and Derandomisation in Descriptive Complexity Theory}
\author[K.~Eickmeyer]{Kord Eickmeyer}	%
\address{Humboldt-Universität zu Berlin, Institut für Informatik,
Logik in der Informatik\newline Unter den Linden 6, 
10099 Berlin, Germany}	%
\email{eickmeye@informatik.hu-berlin.de, grohe@informatik.hu-berlin.de}  %

\author{Martin Grohe}	%
%
%
%
%
%
%
%
%

\keywords{Descriptive Complexity, Probabilistic Complexity Classes, Derandomisation}
\subjclass{F.4.1 [Mathematical Logic]: Finite Model Theory, F.1.2
  [Modes of Computation]: Probabilistic Computation}

\begin{abstract}
  We study probabilistic complexity classes and questions of
  derandomisation from a logical point of view. For each logic $\LL$
  we introduce a new logic $\BP\LL$, \emph{bounded error probabilistic
    $\LL$}, which is defined from $\LL$ in a similar way as the
  complexity class $\BPP$, bounded error probabilistic polynomial time,
  is defined from $\PTIME$.

  Our main focus lies on questions of derandomisation, and we prove
  that there is a query which is definable in $\BP\FO$, the
  probabilistic version of first-order logic, but not in $\Cinf$,
  finite variable infinitary logic with counting. This implies that
  many of the standard logics of finite model theory, like transitive
  closure logic and fixed-point logic, both with and without counting,
  cannot be derandomised. Similarly, we present a query on
  ordered structures which is definable in $\BP\FO$ but not in monadic
  second-order logic, and a query on additive structures which is
  definable in $\BP\FO$ but not in $\FO$. The latter of these queries
  shows that certain uniform variants of $\logic{AC}^0$ (bounded-depth
  polynomial sized circuits) cannot be derandomised. These results are
  in contrast to the general belief that most standard complexity
  classes can be derandomised.

  Finally, we note that $\BP\IFPC$, the probabilistic version of
  fixed-point logic with counting, captures the complexity class
  $\BPP$, even on unordered structures.
\end{abstract}

\maketitle

\section{Introduction}
The relation between different modes of computation --- deterministic,
nondeterministic, randomised --- is a central topic of computational
complexity theory. The $\PTIME$ vs. $\NP$ problem falls under this
topic, and so does a second very important problem, the relation
between randomised and deterministic polynomial time. In technical
terms, this is the question of whether $\PTIME=\BPP$, where $\BPP$ is
the class of all problems that can be solved by a randomised
polynomial time algorithm with two-sided errors and bounded error
probability.  This question differs from the question of whether $\PTIME=\NP$
in that most complexity theorists seem to believe that the classes
$\PTIME$ and $\BPP$ are indeed equal. This belief is supported by deep
results due to Nisan and Wigderson \cite{niswid94} and Impagliazzo and
Wigderson \cite{impwid97}, which link the derandomisation question to
the existence of one-way functions and to circuit lower
bounds; cf. also~\cite{impag06}. Similar derandomisation questions are studied for other
complexity classes such as logarithmic space, and it is believed that
derandomisation is possible for these classes as well.

Descriptive complexity theory gives logical descriptions of complexity
classes and thus enables us to translate complexity theoretic
questions into the realm of logic. While logical descriptions are
known for most natural deterministic and nondeterministic time and
space complexity classes, probabilistic classes such as $\BPP$ have
received very little attention in descriptive complexity theory
yet. In this paper, we study probabilistic complexity
classes and questions of derandomisation from a logical point of
view. For each logic $\LL$ we introduce a new logic $\BP\LL$,
\emph{bounded error probabilistic $\LL$}, which is defined from $\LL$
in a similar way as $\BPP$ is defined from $\PTIME$. The randomness is
introduced to the logic by letting formulas of vocabulary $\tau$ speak
about \emph{random expansions} of $\tau$-structures to a richer
vocabulary $\tau\cup\rho$. We also introduce variants $\logic{RL}$,
$\logic{co-RL}$ with one-sided bounded error and $\logic{PL}$ with
unbounded error, corresponding to other well known complexity classes.

Our main technical results are concerned with questions of
derandomisation. By this we mean upper bounds on the expressive power
of randomised logics in terms of classical logics. Trivially, $\BP\LL$
is at least as expressive as $\LL$, and if the two logics are equally
expressive, then we say that $\BP\LL$ \emph{derandomisable}. More generally,
if $\LL'$ is a (deterministic) logic that is at least as expressive as
$\BP\LL$, then we say that $\BP\LL$ derandomisable \emph{within} $\LL'$.
We prove that $\BP\FO$, bounded error probabilistic first-order logic, is not
derandomisable within $\Cinf$, finite
variable infinitary logic with counting.
This implies that many of the standard
logics of finite model theory, like transitive closure logic and
fixed-point logic, both with and without counting, cannot be
derandomised. Note that these results are in contrast to the
general belief that most standard complexity classes can be
derandomised.

We then investigate whether $\BPFO$ can be derandomised on classes of
structures with built-in relations, such as ordered structures and
arithmetic structures. We prove that $\BPFO$ cannot be derandomised
within $\MSO$, monadic second-order logic, on structures with built-in
order. Furthermore, $\BPFO$ cannot be derandomised on structures with
built-in order and addition. Interestingly and nontrivially, $\BPFO$
can be derandomised within $\MSO$ on structures with built-in order
and addition.  Behle and Lange \cite{bl06} showed that the expressive
power of $\FO$ on classes of ordered structures with certain
predefined relation symbols corresponds to uniform subclasses of
$\logic{AC}^0$, the class of problems decidable by circuit families of
bounded depth, unbounded fan-in and polynomial size. In fact, for any
set $\mathcal{R}$ of built-in relations they show that
$\FO[\mathcal{R}]$ captures $\FO[\mathcal{R}]$-uniform $\logic{AC}^0$.
Arguably the most intensively studied uniformity condition on
$\logic{AC}^0$ is \emph{dlogtime-uniform} $\logic{AC}^0$, which
corresponds to $\FO[+,\times]$, first-order logic with built-in
arithmetic (Barrington et al. \cite{bis90}). The question of whether
dlogtime-uniform $\logic{BPAC}^0$ can be derandomised is still open,
but there is a conditional derandomisation by Viola
\cite{vio04}. There are less uniform variants of $\logic{BPAC}^0$ that
can be proved to be derandomisable by standard arguments;
cf.~\cite{abenor84}. We prove that the more uniform $\FO[+]$-uniform
$\logic{AC}^0$ is not derandomisable. This raises the question of how
weak uniformity must be for derandomisation to be possible.

In the last section of this paper, we turn to more standard questions of
descriptive complexity theory. We prove that $\BP\IFPC$, the probabilistic
version of fixed-point logic with counting, captures the complexity class
$\BPP$, even on unordered structures. For ordered structures, this result is a
direct consequence of the Immerman-Vardi Theorem \cite{imm86,var82}, and for
arbitrary structures it follows from the observation that we can define a
random order with high probability in $\BP\IFPC$. Still, the result is
surprising at first sight because of its similarity with the open question of
whether there is a logic capturing $\PTIME$, and because it is believed that
$\PTIME=\BPP$. The caveat is that the logic $\BP\IFPC$ does not have an
effective syntax and thus is not a ``logic'' according to Gurevich's
\cite{gur88} definition underlying the question for a logic that captures
$\PTIME$. Nevertheless, we believe that $\BP\IFPC$ gives a completely adequate
description of the complexity class $\BPP$, because the definition of $\BPP$
is inherently ineffective as well (as opposed to the definition of $\PTIME$ in
terms of the decidable set of polynomially clocked Turing machines). We obtain
similar descriptions of other probabilistic complexity classes. For example,
randomised logspace is captured by the randomised version of deterministic
transitive closure logic with counting.

\subsection*{Related work}

As mentioned earlier, probabilistic complexity classes such as $\BPP$
have received very little attention in descriptive complexity
theory. There is an unpublished paper due to Kaye \cite{kay02} that
gives a logical characterisation of $\BPP$ on ordered
structures. M\"uller \cite{mueller08} and Montoya (unpublished) study
a logical $\BP$-operator in the context of parameterised complexity
theory. What comes closest to our work ``in spirit'' and also in some
technical aspects is Hella, Kolaitis, and Luosto's work on
\emph{almost everywhere equivalence} \cite{hkl96}, which may be viewed
as a logical account of average case complexity in a similar sense
that our work gives a logical account of randomised complexity. There
is a another logical approach to computational complexity, known as
implicit computational complexity, which is quite different from
descriptive complexity theory. Mitchell, Mitchell, and Scedrov
\cite{mitmitsce98} give a logical characterisation of $\BPP$ by a
higher-order typed programming language in this context.

Let us emphasise that the main purpose of this paper is not the
definition of new probabilistic logics, but an investigation of these
logics in a complexity theoretic context.

\section{Preliminaries}
\subsection{Structures and Queries}
\label{sec:sandq}

A \emph{vocabulary} is a finite set $\tau$ of
relation symbols of fixed arities. A $\tau$-\emph{structure} $A$
consists of a finite set $V(A)$, the \emph{universe} of the structure, and, for
all $R\in\tau$, a
relation $R(A)$ on $A$ whose arity matches that of
$R$. Thus we only consider \emph{finite} and \emph{relational}
structures. Let $\sigma,\tau$ be vocabularies with
$\sigma\subseteq\tau$. Then the \emph{$\sigma$-restriction} of a
$\tau$-structure $B$ is the $\sigma$-structure $B|_\sigma$ with
universe $V(B|_\sigma):=V(B)$ and relations $R(B|_\sigma):=R(B)$ for
all $R\in\sigma$. A \emph{$\tau$-expansion} of a $\sigma$-structure
$A$ is a $\tau$-structure $B$ such that $B|_\sigma=A$.
For every class $\CC$
of structures, $\CC[\tau]$ denotes the class
of all $\tau$-structures in $\CC$. A \emph{renaming} of a vocabulary $\tau$ is
a bijective mapping $r$ from $\tau$ to a vocabulary $\tau'$ such that for all
$R\in\tau$ the relation symbol $r(R)\in\tau'$ has the same arity as $R$. If
$r:\tau\to\tau'$ is a renaming and $A$ is a $\tau$-structure then $A^r$ is the
$\tau'$-structure with $V(A^r):=V(A)$ and $r(R)(A^r):=R(A)$ for all
$R\in\tau$.

We let $\les$, $+$ and $\times$ be distinguished relation symbols of
arity two, three and three, respectively. Whenever any of these relations symbols
appear in a vocabulary $\tau$, we demand that they be interpreted by a
linear order and ternary addition and multiplication relations,
respectively, in all $\tau$-structures. To be precise, let $[a,b]$ be
the set $\{a,a+1,\ldots,b\}$ for $a \leq b \in \N$, and denote by
$\mathcal{N}_n$ the $\{\les, +,\times\}$-structure with
\begin{align*}
V(\mathcal{N}_n) &= [0,n-1],
&
{\les}(\mathcal{N}_n) &= \{ (a,b) \st a \les b \} \text{ and}
\\
{+}(\mathcal{N}_n) &= \{ (a,b,c) \st a + b = c \},
&
{\times}(\mathcal{N}_n) &= \{ (a,b,c) \st a\cdot b = c \}.
\end{align*}
We demand $A|_{\{\les,+,\times\}\cap \tau} \cong
(\mathcal{N}_{\betrag{A}})|_{\{\les,+,\times\}\cap \tau}$ for all
$\tau$-structures $A$. We call structures whose vocabulary contains
any of these relation symbols \emph{ordered}, \emph{additive} and
\emph{multiplicative}, respectively. We say that a formula
$\varphi(x)$ with exactly one free variable $x$ \emph{defines an
  element} if in every structure it is satisfied by exactly one
element. Since we may identify the elements of an ordered structure
uniquely with natural numbers it makes sense to say, e.g., that
``$\varphi(x)$ defines a prime number'' or ``$\varphi(x)$ defines a
number $\leq \log^{O(1)} \betrag{A}$'', and we will sometimes do so.

On ordered structures, every fixed natural number $i$ can be defined
in first-order logic by a formula $\varphi_{i\text{-th}}$ using only
three variables as follows:
\[
\begin{split}
  \varphi_{0\text{-th}}(x) &:= \forall y\,x \leq y
  \\
  \varphi_{(n+1)\text{-th}}(x) &:=
  \exists y\forall z\,\big(\varphi_{n\text{-th}}(y)\wedge \neg(x \dot =
  y)
  \wedge y \leq x\wedge 
  \\
  &\qquad((y \leq z \wedge z \leq x)\rightarrow
  (y \dot = z \vee y \dot = z))\big).
\end{split}
\]
Because the ordering may be defined using the addition relation, the
same holds true on additive structures, again using only three
variables.

A \emph{$k$-ary $\tau$-global relation} is a mapping $\CR$ that
associates a $k$-ary relation $\CR(A)$ with each $\tau$-structure
$A$. A $0$-ary $\tau$-global relation is usually called a
\emph{Boolean} $\tau$-global relation. We identify the two $0$-ary
relations $\emptyset$ and $\{()\}$, where $()$ denotes the empty
tuple, with the truth values $\mathsf{false}$ and $\mathsf{true}$,
respectively, and we identify the Boolean $\tau$-global relation $\CR$
with the class of all $\tau$-structures $A$ with
$\CR(A)=\mathsf{true}$.  A \emph{$k$-ary $\tau$-query} is a $k$-ary
$\tau$-global relation $\CQ$ preserved under isomorphism, that is, if
$f$ is an isomorphism from a $\tau$-structure $A$ to a
$\tau$-structure $B$ then for all $\vec a\in V(A)^k$ it holds that
$\vec a\in\CQ(A)\iff f(\vec a)\in\CQ(B)$.

\subsection{Logics}
\label{sec:logic}
A \emph{logic} $\LL$ has a \emph{syntax} that assigns a set
$\LL[\tau]$ of \emph{$\LL$-formulas of vocabulary $\tau$} with each
vocabulary $\tau$ and a \emph{semantics} that associates a $\tau$-global
relation $\CQ^{\LL[\tau]}_\phi$ with
every formula $\phi\in\LL[\tau]$ such that for all vocabularies $\sigma,\tau,\tau'$ the following three conditions are
satisfied:
\begin{enumerate}[(1)]
\item For all $\phi\in\LL[\tau]$ the global relation $\CQ^{\LL[\tau]}_\phi$ is
  a $\tau$-query.
\item If $\sigma\subseteq\tau$ then $\LL[\sigma]\subseteq\LL[\tau]$, and for all
  formulas $\phi\in\LL[\sigma]$ and all $\tau$-structures $A$ it holds that
  $
  \CQ^{\LL[\sigma]}_\phi(A|_\sigma)=\CQ^{\LL[\tau]}_\phi(A).
  $
\item If $r:\tau\to\tau'$ is a renaming, then for every formula
  $\phi\in\LL[\tau]$ there is a formula $\phi^r\in\LL[\tau']$ such that for
  all $\tau$-structures $A$ it holds that
  $
  \CQ^{\LL[\tau]}_\phi(A)=\CQ^{\LL[\tau']}_{\phi^r}(A^r).
  $
\end{enumerate}
Condition (ii) justifies dropping the vocabulary $\tau$ in the
notation for the queries and just write $\CQ^\LL_\phi$.  For a
$\tau$-structure $A$ and a tuple $\vec a$ whose length matches the arity of
$\CQ_\phi^\LL$, we usually write $A\models_\LL\phi[\vec a]$ instead of $\vec
a\in\CQ^\LL_\phi(A)$. If $\CQ^\LL_\phi$ is a $k$-ary query, then we call $\phi$ a
\emph{$k$-ary} formula, and if $\CQ^\LL_\phi$ is Boolean, then we call
$\phi$ a \emph{sentence}. Instead of $A\models_\LL\phi[()]$ we just write
$A\models_\LL\phi$ and say that $A$ \emph{satisfies} $\phi$. We omit the index
$\LL$ if $\LL$ is clear from the context.

A query $\CQ$ is \emph{definable} in a logic $\LL$ if there is an
$\LL$-formula $\phi$ such that $\CQ=\CQ_\phi^\LL$. Two formulas
$\phi_1,\phi_2\in\LL[\tau]$ are \emph{equivalent} (we write
$\phi_1\equiv\phi_2$) if they define the same query. We say that a logic
$\LL_1$ is \emph{weaker} than a logic $\LL_2$ (we write
$\LL_1\leqq\LL_2$) if every query definable in $\LL_1$ is also
definable in $\LL_2$. Similarly, we define it for $\LL_1$ and $\LL_2$ to be
\emph{equivalent} (we write $\LL_1\equiv\LL_2$) and for $\LL_1$ to be
\emph{strictly weaker} than $\LL_2$ (we write $\LL_1\lneqq\LL_2$). The
logics $\LL_1$ and $\LL_2$ are \emph{incomparable} if neither
$\LL_1\leqq\LL_2$ nor $\LL_2\leqq\LL_1$.

\begin{rem}
  Our notion of logic is very minimalistic, usually logics are
  required to meet additional conditions (see \cite{ebb85} for a
  thorough discussion). In particular, we do not require the syntax of
  a logic to be effective. Indeed, the main logics studied in this
  paper have an undecidable syntax. Our definition is in the tradition
  of abstract model theory~(cf. \cite{barfef85}); proof theorists tend
  to have a different view on what constitutes a logic.
\end{rem}

We assume that the reader has heard of the standard
logics studied in finite model theory, specifically \emph{first-order logic}
$\FO$, \emph{second-order logic} $\SO$ and its fragments $\Sigma^1_k$,
\emph{monadic second-order logic} $\MSO$, \emph{transitive closure logic}
$\TC$ and its \emph{deterministic} variant $\DTC$, \emph{least},
\emph{inflationary}, and \emph{partial fixed-point
  logic} $\LFP$, $\IFP$, and $\PFP$, and \emph{finite variable
  infinitary logic} $\Linf$. For all these logics except $\LFP$ there are also
\emph{counting versions}, which we denote by $\FOC$, $\TCC$, $\ldots$,
$\PFPC$ and $\Cinf$, respectively.
Only familiarity with first-order logic is required
to follow most of the technical arguments in this paper. The other logics are
more or less treated as ``black boxes''. We will say a bit more about
some of them
when they occur later.  The following diagram shows how the logics
compare in expressive power:
\begin{equation}
  \label{eq:expressive-power}
  \begin{array}{ccccccccccc}
    \FO&\lneqq&\DTC&\lneqq&\TC&\lneqq&\LFP\;\equiv\;\IFP&\lneqq&\PFP&\lneqq&\Linf\\
    \lneqq&&\lneqq&&\lneqq&&\lneqq&&\lneqq&&\lneqq\\
    \FOC&\lneqq&\DTCC&\lneqq&\TCC&\lneqq&\IFPC&\lneqq&\PFPC&\lneqq&\Cinf.
  \end{array}
\end{equation}
Furthermore, $\MSO$ is strictly stronger than $\FO$ and
incomparable with all other logics displayed in \eqref{eq:expressive-power}.

\subsection{Complexity theory}

We assume that the reader is familiar with the basics of computational
complexity theory and in particular the standard complexity classes such as
$\PTIME$ and $\NP$. Let us briefly review the class $\BPP$, \emph{bounded
  error probabilistic polynomial time}, and other probabilistic complexity
classes: A language $L\subseteq\Sigma^*$ is in $\BPP$ if there is a polynomial
time algorithm $M$, expecting as input a string $x\in\Sigma^*$ and a string
$r\in\{0,1\}^*$ of ``random bits'', and a polynomial $p$ such that for every $x\in\Sigma^*$ the
following two conditions are satisfied:
\begin{enumerate}[(i)]
  \item If $x\in L$, then $\Pr_{r\in\{0,1\}^{p(|x|)}}\big(M\text{
      accepts }(x,r)\big)\ge\frac{2}{3}$.
  \item If $x\not\in L$, then $\Pr_{r\in\{0,1\}^{p(|x|)}}\big(M\text{
      accepts }(x,r)\big)\le\frac{1}{3}$.
\end{enumerate}
In both conditions, the probabilities range over strings
$r\in\{0,1\}^{p(|x|)}$ chosen uniformly at random. The choice of the
error bounds $1/3$ and $2/3$ in (i) and (ii) is somewhat arbitrary,
they can be replaced by any constants $\alpha,\beta$ with $0<\alpha<\beta<1$ without changing the
complexity class. (To reduce the error probability of an algorithm we
simply repeat it several times with independently chosen random bits
$r$.)

Hence $\BPP$ is the class of all problems that can be solved by a
randomised polynomial time algorithm with bounded error
probabilities. $\logic{RP}$ is the class of all problems that can be
solved by a randomised polynomial time algorithm with bounded
one-sided error on the positive side (the bound $1/3$ in (ii) is
replaced by $0$), and $\logic{co-RP}$ is the class of all problems
that can be solved by a randomised polynomial time algorithm with
bounded one-sided error on the negative side (the bound $2/3$ in (i)
is replaced by $1$). Finally, $\logic{PP}$ is the class we obtain if
we replace the lower bound $\ge 2/3$ in (i) by $> 1/2$ and the upper
bound $\le 1/3$ in (ii) by $\le 1/2$. Note that $\logic{PP}$ is not a
realistic model of ``efficient randomised computation'', because there
is no easy way of deciding whether an algorithm accepts or rejects its
input. Indeed, by Toda's Theorem \cite{tod91}, the class $\logic{P}^{\logic{PP}}$
contains the full polynomial hierarchy.  By the Sipser-G\'acs
Theorem~(see \cite{lau83}), $\BPP$ is contained in the second level of
the polynomial hierarchy. More precisely,
$\BPP\subseteq\Sigma^p_2\cap\Pi^p_2$.  It is an open question whether
$\BPP\subseteq\NP$. However, as pointed out in the introduction, there
are good reasons to believe that $\BPP=\PTIME$.
\subsection{Descriptive complexity}
It is common in descriptive complexity theory to view complexity
classes as classes of Boolean queries, rather than classes of formal
languages. This allows it to compare logics with complexity
classes. The translation between queries and languages is carried out
as follows:
Let $\tau$ be a vocabulary, and assume that $\mathord{\les}\not\in\tau$. With
each ordered $(\tau\cup\{\les\})$-structure $B$ we can associate a
binary string $s(B)\in\{0,1\}^*$ in a canonical way. Then with each class
$\CC\subseteq\CO[\tau\cup\{\les\}]$ of ordered $\tau$ structures we
associate the language
$L(\CC):=\{s(B)\mid B\in\CC\}\subseteq\{0,1\}^*$. For a Boolean
$\tau$-query $\CQ$, let $\CQ_\les:=\big\{B\in\CO[\tau\cup\mathord\les]\bigmid
B|_\tau\in\CQ\big\}$ be the class of all ordered
$(\tau\cup\{\les\})$-expansions of structures in $\CQ$. We say that
$\CQ$ is \emph{decidable} in a complexity class $\LK$ if the language
$L(\CQ_{\les})$ is contained in $\LK$. We say that a logic $\LL$
\emph{captures} $\LK$ if for all Boolean queries $\CQ$ it holds that
$\CQ$ is definable in $\LL$ if and only if $\CQ$ is decidable in
$\LK$. We say that $\LL$ is \emph{contained} in $\LK$ if all Boolean
queries definable in $\LL$ are decidable in $\LK$.

\begin{rem}
  Just like our notion of ``logic'', our notion of a logic
  ``capturing'' a complexity class is very minimalistic, but
  completely sufficient for our purposes.  For a deeper discussion of
  logics capturing complexity classes we refer the reader to one of
  the textbooks \cite{ebbflu95,gklmsvvw07,imm99,lib04}.
\end{rem}

\section{Randomised logics}

Throughout this section, let $\tau$ and $\rho$ be disjoint
vocabularies. Relations over $\rho$ will be ``random'', and we will reserve the letter $R$ for relation symbols
from $\rho$. We
are interested in \emph{random $(\tau\cup\rho)$-expansions} of $\tau$-structures.
For a $\tau$-structure $A$, by $\CX(A,\rho)$ we denote
the class of all $(\tau\cup\rho)$-expansions of $A$. We view
$\CX(A,\rho)$ as a probability space with the uniform
distribution. Note that we can ``construct'' a random
$X\in\CX(A,\rho)$ by deciding independently for all $k$-ary
$R\in\rho$ and all tuples $\vec a\in V(A)^k$ with probability $1/2$
whether $\vec a\in R(X)$. Hence if $\rho=\{R_1,\ldots,R_k\}$, where $R_i$ is
$r_i$-ary, then a random $X\in \CX(A,\rho)$ can be described by random
bitstring of length $\sum_{i=1}^kn^{r_i}$, where $n:=|V(A)|$.
We are mainly interested in the probabilities
\[
\Pr_{X\in\CX(A,\rho)}(X\models\phi)
\]
that a random $(\tau\cup\rho)$-expansion of a $\tau$-structure $A$
satisfies a sentence $\phi$ of vocabulary $\tau\cup\rho$ of some logic.

\begin{defi}
  Let $\LL$ be a logic and $0\le\alpha\le\beta\le1$.
  \begin{enumerate}[(1)]
  \item A formula
    $\phi\in\LL[\tau\cup\rho]$ that defines a $k$-ary query has an
    \emph{$(\alpha,\beta]$-gap} if for all $\tau$-structures $A$ and
    all $\vec a\in V(A)^k$ it holds that
    \[
    \Pr_{X\in\CX(A,\rho)}(X\models\phi[\vec a])\le\alpha\qquad\text{or}\qquad\Pr_{X\in\CX(A,\rho)}(X\models\phi[\vec
    a])>\beta.
    \]
    \item
      The logic $\P_{(\alpha,\beta]}\LL$ is defined as follows: For
      each vocabulary $\tau$,
      \[
      \P_{(\alpha,\beta]}\LL[\tau]:=\bigcup_{\rho}\big\{\phi\in\LL[\tau\cup\rho]\bigmid
      \phi\text{ has an $(\alpha,\beta]$-gap}\big\},
      \]
      where the union ranges over all vocabularies $\rho$ disjoint from
      $\tau$. To define the semantics, let
      $\phi\in\P_{(\alpha,\beta]}\LL[\tau]$. Let $k,\rho$ such that
      $\phi\in\LL[\tau\cup\rho]$ and $\phi$ is $k$-ary. Then for all $\tau$-structures $A$,
      \[
      \CQ^{\P_{(\alpha,\beta]}\LL}_\phi(A):=\big\{\vec a\in V(A)^k\bigmid
      \Pr_{X\in\CX(A,\rho)}(X\models_\LL\phi[\vec a])>\beta\big\}.
      \]
   \end{enumerate}
\end{defi}

\noindent It is easy to see that for every logic $\LL$ and all
$\alpha,\beta$ with $0\le\alpha\leq\beta\le1$ the logic
$\P_{(\alpha,\beta]}\LL$ satisfies conditions (i)--(iii) from
  Subsection~\ref{sec:logic} and hence is indeed a well-defined logic.
We let
\[
\P\LL:=\P_{(1/2,1/2]}\LL
\quad\text{and}\quad
\logic{RL}:=\P_{(0,2/3]}\LL
\quad\text{and}\quad
\BP\LL:=\P_{(1/3,2/3]}\LL.
\]
We can also define a logic $\P_{[\alpha,\beta)}\LL$ and let
$\logic{co-RL}:=\P_{[1/3,1)}\LL$. The following lemma, which is an
adaptation of classical probability amplification techniques to
randomised logics, shows that for reasonable $\LL$ the strength of the
logic $\P_{(\alpha,\beta]}\LL$ does not depend on the exact choice of
the parameters $\alpha,\beta$. This justifies the arbitrary choice of
the constants $1/3,2/3$ in the definitions of $\logic{RL}$ and
$\BP\LL$.

\begin{lem}
  Let $\LL$ be a logic that is closed under conjunctions and
  disjunctions. Then for all $\alpha,\beta$ with $0<\alpha<\beta<1$ it holds
  that
  $
  \P_{(0,\beta]}\LL\equiv\logic{RL}
  $
  and
  $
  \P_{(\alpha,\beta]}\LL\equiv\BP\LL.
  $
\end{lem}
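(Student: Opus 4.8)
The plan is to mimic the standard probability amplification argument for $\BPP$ (running an algorithm many times and taking a majority vote), but carried out entirely within the logic using its closure under conjunction and disjunction. The key observation is that a random $(\tau\cup\rho^{(m)})$-expansion, where $\rho^{(m)}$ consists of $m$ disjoint renamed copies $\rho_1,\dots,\rho_m$ of $\rho$, is the same (as a probability space) as $m$ independent random $(\tau\cup\rho)$-expansions glued together. So if $\phi\in\LL[\tau\cup\rho]$ has a small but nonzero acceptance gap, I can form, for each subset $S\subseteq\{1,\dots,m\}$ of size roughly $m/2$, the conjunction $\bigwedge_{i\in S}\phi^{r_i}$ asserting that copy $i$ accepts for all $i\in S$ (here $r_i:\tau\cup\rho\to\tau\cup\rho_i$ is the renaming fixing $\tau$, and $\phi^{r_i}$ exists by condition (iii) of the definition of a logic), and then take the disjunction over all such $S$. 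This ``threshold'' formula $\psi_m$ says ``at least $m/2$ of the $m$ copies accept''.

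First I would set up the renamings and the equality of probability spaces $\CX(A,\rho^{(m)})\cong\CX(A,\rho)^m$ carefully, and note $\psi_m\in\LL[\tau\cup\rho^{(m)}]$ since $\LL$ is closed under (finite) conjunctions and disjunctions. Second, I would invoke a Chernoff-type estimate: if each copy independently accepts with probability $p$, the probability that at least $m/2$ copies accept is at least $1-e^{-cm}$ when $p\ge\beta$ (for the original gap), and at most $e^{-cm}$ when $p\le\alpha$, for a constant $c>0$ depending only on the fixed gap $(\alpha,\beta)$. Hence choosing $m$ a large enough constant pushes the new acceptance probabilities past any desired thresholds. To get $\P_{(\alpha,\beta]}\LL\equiv\BP\LL$: starting from $\phi\in\P_{(\alpha,\beta]}\LL$, apply this with $\alpha,\beta$ the original parameters to get, for suitable $m$, a formula $\psi_m$ with a $(1/3,2/3]$-gap defining the same query; the reverse inclusion is symmetric (or trivial when $1/3\le\alpha$ and $\beta\le 2/3$, amplifying in whichever direction is needed). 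For $\P_{(0,\beta]}\LL\equiv\logic{RL}=\P_{(0,2/3]}\LL$, the point is that a zero lower bound is preserved under conjunction: if some copy can accept only on a $0$-probability event then so does the conjunction, so $\psi_m$ still has lower bound $0$, while the upper bound $\beta$ amplifies down past $2/3$ as before; the reverse is again symmetric.

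I should be slightly careful about a few bookkeeping points: the query is $k$-ary, so all probabilities and gaps are stated uniformly over tuples $\vec a\in V(A)^k$, and the amplification estimate must hold tuple-by-tuple — but since the Chernoff bound only uses that the per-copy probability lies in $[0,\alpha]\cup(\beta,1]$, which holds for every $\vec a$ by hypothesis, this is automatic. I also need the renamings $r_i$ to fix $\tau$ pointwise and map $\rho$ onto pairwise-disjoint copies $\rho_i$ disjoint from $\tau$; condition (iii) only guarantees $\phi^{r_i}$ as a formula of vocabulary $\tau\cup\rho_i$, and condition (ii) lets me view all of them inside $\LL[\tau\cup\rho^{(m)}]$. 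The main obstacle — really the only non-formal step — is getting the Chernoff/majority-vote constants right so that a single fixed $m$ (depending only on $\alpha,\beta$, not on $A$) simultaneously pushes the lower probability above $2/3$ and the upper probability below $1/3$; this is exactly the classical calculation and I would simply cite or state the binomial tail bound rather than grind through it. Everything else is a routine translation of the $\BPP$ amplification argument into the syntax of $\LL$.
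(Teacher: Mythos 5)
Your proposal is correct and follows essentially the same route as the paper's proof: independent renamed copies $\rho^{(1)},\dots,\rho^{(m)}$ of the random vocabulary, a threshold formula built as a disjunction over subsets of conjunctions, a Chernoff/binomial-tail estimate for the two-sided ($\BP$) case, and the observation that a zero acceptance probability is preserved by conjunction for the one-sided ($\logic{RL}$) case. The only point to tighten is that the threshold must sit strictly between $\alpha m$ and $\beta m$ (e.g.\ at $\lceil\tfrac{\alpha+\beta}{2}m\rceil$ rather than $m/2$, which fails for gaps not straddling $1/2$) --- but you flag exactly this constant-tuning step yourself, and the paper's own write-up is no more explicit about it.
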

\begin{proof}
Let $\tau$ an $\rho = \{R_1,\ldots,R_k\}$ be disjoint relational vocabularies and let
$\varphi \in \LL[\tau \cup \rho]$. For any $n \geq 1$ we define a new vocabulary
\[
\rho^{(n)} := \{ R^{(i)}_j \st 1 \leq i \leq n, 1 \leq j \leq k \},
\]
where the arity of $R^{(i)}_j$ is that of $R_j \in \rho$. Using
the renaming property with the renaming
\[
r^{(i)}:(\tau \cup \rho)  \to (\tau \cup \rho^{(n)})
\]
that leaves $\tau$ fixed and maps $R_j \in \rho$ to $R^{(i)}_j$ we get
sentences $\varphi^{(i)}$, which are the sentence $\varphi$ with every
occurrence of $R_j$ replaced by $R^{(i)}_j$. Since $\LL$ is closed
under conjunctions and disjunctions, for every $0 < l \leq n$ there is
an $\LL[\tau \cup \rho^{(n)}]$-sentence
\[
\varphi^{(n,l)} :=
\bigvee_{\substack{I \subseteq [n]\\\betrag{I} = l}} \bigwedge_{i\in I}
\varphi^{(i)}
\]
which is satisfied iff at least $l$ of the $\varphi^{(i)}$ are
satisfied. Notice that the $\varphi^{(i)}$ use distinct random
relations, so they are satisfied independently of each other.

Clearly, if $\Pr(X \models \varphi) = 0$ then also $\Pr(X \models
\varphi^{(n,l)}) = 0$, because we assumed $l \geq 1$. On the other
hand, if $\Pr(X \models \varphi)  > \beta$ for some $\beta \in (0,1)$,
then
\begin{align}
\Pr(X \models \varphi^{(n,1)}) &=
1 - (1-\Pr(X \models \varphi))^n
\\
&> 1 - (1-\beta)^n,
\end{align}
and this bound can be made arbitrarily close to $1$ by choosing $n$
sufficiently large. This proves the claim about $\logic{RL}$.

For $\logic{BPL}$, notice that if $\varphi$ has an
$(\alpha,\beta]$-gap for some any $0 < \alpha < \beta < 1$, then for
any $0 < \alpha' < \beta' < 1$ there is an $n \in \N$ such that
\[
\varphi^{(n,\lceil \frac{\beta-\alpha}{2}\rceil)}
\]
has an $(\alpha',\beta']$-gap. In fact, the Chernoff bound~(see, e.g.,
  \cite{motrag95}) gives very sharp estimates on $n$ in terms of
  $\alpha$, $\beta$, $\alpha'$ and $\beta'$, though we only need the
  mere existence of such an $n$ here.
\end{proof}

\subsection{First observations}

We start by observing that the syntax of $\BP\FO$ and thus of most other
logics $\BP\LL$ is undecidable. This follows easily from Trakhtenbrot's
Theorem (see \cite{ebbflu95} for similar undecidability proofs):

\begin{obs}
  For all $\alpha,\beta$ with $0\le\alpha<\beta<1$ and all
  vocabularies $\tau$ containing at least one at least binary relation
  symbol, the set $\BP_{(\alpha,\beta]}\FO[\tau]$ is undecidable.
\end{obs}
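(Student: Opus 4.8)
The plan is to reduce the complement of finite satisfiability of first-order logic to $\BP_{(\alpha,\beta]}\FO[\tau]$. By Trakhtenbrot's theorem the set of sentences in $\FO[\tau]$ that have a finite model is undecidable, since $\tau$ contains a relation symbol of arity at least two. The main idea is to decouple the undecidable content from the probabilistic content: a classical sentence $\theta\in\FO[\tau]$ (a Trakhtenbrot sentence encoding a Turing machine computation) carries the undecidability, while a separate, independent random gadget $\chi$ forces the acceptance probability of $\theta\wedge\chi$ to equal some fixed value inside the critical interval $(\alpha,\beta]$, so that $\theta\wedge\chi$ has an $(\alpha,\beta]$-gap exactly when $\theta$ is unsatisfiable.

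First I would build the gadget. Fix a dyadic rational $c=a/2^b$ with $\alpha<c\le\beta$; such a $c$ exists because $\alpha<\beta$ and the dyadic rationals are dense, and automatically $0<c<1$. Let $\rho=\{Q_1,\ldots,Q_b\}$ consist of $b$ fresh $0$-ary relation symbols, so that $\rho$ is disjoint from $\tau$, and let $\chi\in\FO[\rho]$ be a Boolean combination of $Q_1,\ldots,Q_b$ that holds in exactly $a$ of the $2^b$ structures of $\CX(A,\rho)$. A $0$-ary random relation is just an independent unbiased bit, so $\Pr_{X\in\CX(A,\rho)}(X\models\chi)=c$ for every $\tau$-structure $A$, regardless of $|V(A)|$.

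Next I would associate with every $\theta\in\FO[\tau]$ the sentence $\phi_\theta:=\theta\wedge\chi\in\FO[\tau\cup\rho]$; this map is clearly computable. Since $\theta$ uses only symbols of $\tau$ and $\chi$ only symbols of $\rho$, for a random $X\in\CX(A,\rho)$ the events $X\models\theta$ and $X\models\chi$ are independent, and $X\models\theta$ holds iff $A\models\theta$; hence $\Pr_{X\in\CX(A,\rho)}(X\models\phi_\theta)$ equals $c$ if $A\models\theta$ and equals $0$ otherwise. As $0\le\alpha$ (so the value $0$ trivially satisfies ``$\le\alpha$'') while $\alpha<c\le\beta$ (so the value $c$ satisfies neither ``$\le\alpha$'' nor ``$>\beta$''), the sentence $\phi_\theta$ has an $(\alpha,\beta]$-gap if and only if no structure satisfies $\theta$, i.e.\ if and only if $\theta$ has no finite model. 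Thus $\phi_\theta\in\BP_{(\alpha,\beta]}\FO[\tau]$ iff $\theta$ has no finite model, so $\theta\mapsto\phi_\theta$ is a many-one reduction of the complement of finite satisfiability to $\BP_{(\alpha,\beta]}\FO[\tau]$, and the latter is undecidable.

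The only step that is not entirely mechanical, and hence the main point to get right, is realising a probability strictly inside $(\alpha,\beta]$ uniformly over all structures; this is what the $0$-ary random relations provide — each is a single unbiased bit, so a Boolean combination of $b$ of them has a size-independent satisfaction probability of the form $a/2^b$, and density of the dyadic rationals lets us place this value in $(\alpha,\beta]$. Everything else — the existence of the Trakhtenbrot sentence $\theta$, and the fact that it makes finite satisfiability undecidable for vocabularies with an at least binary relation symbol — is standard, so the statement indeed ``follows easily'' from Trakhtenbrot's theorem.
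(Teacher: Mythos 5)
Your overall strategy is the same as the paper's: reduce (the complement of) finite satisfiability, which Trakhtenbrot's theorem makes undecidable for $\tau$ with an at least binary relation symbol, by conjoining the Trakhtenbrot sentence with an independent random gadget whose satisfaction probability is a fixed dyadic rational inside $(\alpha,\beta]$, so that the gap holds iff the sentence is unsatisfiable. You also correctly identify the one non-mechanical step: producing a satisfaction probability that is \emph{independent of the structure size}.

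The gap is in how you realise that step. You use $0$-ary random relation symbols $Q_1,\dots,Q_b$, but the paper's framework does not really admit these: all the machinery of the paper treats random relations as having positive arity (note that the recalled 0--1 law, ``$\lim_n\Pr_{X\in\CX(S_n,\rho)}(X\models\phi)\in\{0,1\}$ for every sentence $\phi\in\FO[\rho]$'', would be outright false if $\rho$ could contain a $0$-ary symbol $Q$, since $\Pr(X\models Q)=1/2$ for all $n$; likewise the paper's remark that ``some technical condition such as definability of an element \dots is necessary to separate $\logic{PFO}$ from $\FO$'' presupposes that no such free coin is available). With only positive-arity random relations, the probability of any first-order event generally depends on $|V(A)|$ and, by the 0--1 law, tends to $0$ or $1$, so your gadget $\chi$ cannot be built as stated. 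The paper's fix is exactly the ingredient your proof is missing: it restricts attention to the FO-definable class $\mathcal G$ of graphs with exactly one isolated vertex, and uses $k$ \emph{unary} random relations $R_1,\dots,R_k$ evaluated at that unique definable element; the sentences $\psi_S:=\exists x\,((\forall y\,\neg Exy)\wedge\bigwedge_{i\in S}R_ix\wedge\bigwedge_{i\notin S}\neg R_ix)$ then each have probability exactly $2^{-k}$ on $\mathcal G$, and a disjunction of $a$ of them gives the desired dyadic probability $a\cdot 2^{-k}\in(\alpha,\beta)$ uniformly over all structures in $\mathcal G$ (with the reduction applied to satisfiability relative to $\mathcal G$, which is still undecidable). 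If you replace your $0$-ary bits by this anchored-unary-relation construction, the rest of your argument goes through essentially verbatim.
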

\begin{proof}[Proof Sketch]
  Assume for some $0 \le \alpha < \beta < 1$ and some $\tau$
  containing a binary relation symbol $E$ the set
  $\BP_{(\alpha,\beta]}\FO[\tau]$ is decidable.

  By Trakhtenbrot's Theorem (cf.~\cite[Thm. 7.2.1]{ebbflu95}), the
  satisfiability of a first-order formula $\psi \in \FO[\tau]$ on
  finite graphs is undecidable. Let $\mathcal{G}$ be the class of all
  graphs with exactly one isolated vertex, and let
  $\varphi_{\mathcal{G}}$ be a sentence defining $\mathcal{G}$
  on finite structures. By standard arguments, whether a formula is
  satisfiable in $\mathcal{G}$ or on is undecidable.

  Let $p = a\cdot 2^{-k} \in (\alpha,\beta)$ with $a \in \N$ be a dyadic
  rational in the interval $(\alpha,\beta)$, and let $R_1,\ldots,R_k$ be unary
  random relations. For every $S \subset [k]$, the sentence
  \[
  \psi_S := \exists x \left((\forall y\,\neg Exy) \wedge \bigwedge_{i \in S} R_ix \wedge
  \bigwedge_{i \not \in S} \neg R_ix\right)
  \]
  has satisfaction probability $2^{-k}$ in all structures in
  $\mathcal{G}$.
  Thus for a family
  $\mathcal{S}=\{S_1,\ldots,S_a\}$ of $a$ distinct subsets of $[k]$,
  the sentence
  \[
  \psi_{\mathcal{S}} := \bigvee_{S \in \mathcal{S}}\psi_S
  \]
  is satisfied with probability $p$ on such structures. But now the
  sentence
  \[
  \varphi_{\mathcal{G}} \rightarrow (\chi \wedge \psi_{\mathcal{S}})
  \]
  is in $\BP_{(\alpha,\beta]}\FO[\tau]$ if and only if $\chi$ is
  not satisfiable on $\mathcal{G}$.%
\end{proof}

For each $n$, let $S_n$ be the $\emptyset$-structure with universe
$V(S_n):=\{1,\ldots,n\}$. Recall the 0-1-law for first order logic
\cite{fag76,gleetal69}. In our terminology, it says that for each
vocabulary $\rho$ and each sentence $\phi\in\FO[\rho]$ it holds that
\[
\lim_{n\to\infty}\Pr_{X\in\CX(S_n,\rho)}(X\models\phi)\in\{0,1\}
\]
(in particular, this limit exists).  There is also an appropriate
asymptotic law for formulas with free variables. This implies that on
structures with empty vocabulary, $\logic{PFO}$ (and in particular
$\BPFO$) has the same expressive power as $\FO$. As there is also a
0-1-law for the logic $\Linf$ \cite{kolvar92b}, we actually get the
following stronger statement:

\begin{obs}
  \label{obs:emptyvocab}
  Every formula $\phi\in\logic{P}\Linf[\emptyset]$ is equivalent to a formula
  $\phi'\in\FO[\emptyset]$.
\end{obs}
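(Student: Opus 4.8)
The plan is to combine the 0-1 law for $\Linf$ with the elementary observation that over the empty vocabulary $\FO$ can already express every query that is ``eventually determined by the equality type of its free variables''. First one unravels the definitions: since $\emptyset\cup\rho=\rho$ and the requirement of having a $(1/2,1/2]$-gap is vacuous, we have $\logic{P}\Linf[\emptyset]=\bigcup_\rho\Linf[\rho]$, and for a $k$-ary formula $\phi\in\Linf[\rho]$ the associated query is $\CQ^{\logic{P}\Linf}_\phi(A)=\{\vec a\in V(A)^k\mid\Pr_{X\in\CX(A,\rho)}(X\models\phi[\vec a])>1/2\}$. Up to isomorphism every $\emptyset$-structure is some $S_n$, and since every permutation of $V(S_n)$ is an automorphism of $S_n$ and induces a measure-preserving bijection of $\CX(S_n,\rho)$, the probability $\Pr_{X\in\CX(S_n,\rho)}(X\models\phi[\vec a])$ depends only on $n$ and on the \emph{equality type} of $\vec a$, i.e.\ on the partition of $\{1,\ldots,k\}$ induced by $i\sim j\iff a_i=a_j$. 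For such a partition $P$ let $f_P(n)$ denote this common value, defined once $n$ is at least the number of classes of $P$.

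Next I would invoke the 0-1 law for $\Linf$ \cite{kolvar92b} in its form for formulas with free variables: for every partition $P$ of $\{1,\ldots,k\}$ the limit $\lim_{n\to\infty}f_P(n)$ exists and equals $0$ or $1$. (The free-variable version reduces to the sentential one in the usual way: the almost-sure $\Linf$-theory over $\rho$ is complete and its countable model is ultrahomogeneous, hence it stays complete after naming finitely many elements.) Call a partition $P$ \emph{good} if this limit is $1$. As there are only finitely many partitions of $\{1,\ldots,k\}$, I may fix a single $N\geq k$ such that for every partition $P$ and every $n\geq N$ one has $f_P(n)>1/2$ if and only if $P$ is good.

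It then remains to write down $\phi'\in\FO[\emptyset]$. Let $\eta_P(\vec x)$ be the quantifier-free formula (a Boolean combination of atoms $x_i=x_j$) expressing that $\vec x$ has equality type $P$, let $\lambda_{\geq N}\in\FO[\emptyset]$ say ``there are at least $N$ elements'', and for $n<N$ let $\lambda_{=n}\in\FO[\emptyset]$ say ``there are exactly $n$ elements''; all of these are first-order over the empty vocabulary. Put
\[ \phi'\;:=\;\Bigl(\lambda_{\geq N}\wedge\bigvee_{P\text{ good}}\eta_P(\vec x)\Bigr)\;\vee\;\bigvee_{\substack{n<N\\ P\text{ with }f_P(n)>1/2}}\bigl(\lambda_{=n}\wedge\eta_P(\vec x)\bigr). \]
On $S_n$ with $n\geq N$ a tuple of equality type $P$ satisfies $\phi'$ iff $P$ is good iff $f_P(n)>1/2$; on $S_n$ with $n<N$ it satisfies $\phi'$ iff it was placed into the second disjunction, i.e.\ iff $f_P(n)>1/2$. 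Since queries are isomorphism-invariant, this shows $\phi'\equiv\phi$, as claimed.

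The only genuinely non-routine ingredient is the 0-1 law for $\Linf$ itself, which we are free to cite; the one mild technical point is its extension from sentences to formulas with free variables. Everything else is bookkeeping: the finiteness of the set of equality types on $k$ variables, the measure-preservation argument reducing matters to the quantities $f_P(n)$, and the elementary fact that $\FO[\emptyset]$ can pin the universe size down to any finite bound and can test the equality type of its free variables.
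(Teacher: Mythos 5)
Your proof is correct and follows the same route the paper takes: it derives the observation from the 0-1 law for $\Linf$ of Kolaitis and Vardi (in its version for formulas with free variables), reducing everything to equality types and a finite threshold on the universe size, with the finitely many small structures handled by cardinality formulas in $\FO[\emptyset]$. The paper leaves these details implicit, so your write-up is simply a fleshed-out version of the intended argument.
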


\noindent
As $\FOC$ is strictly stronger than $\FO$ even on structures of empty
vocabulary, this observation implies that there are queries definable
in $\FOC$, but not in $\logic{(B)P}\Linf$.

Furthermore, the Sipser-Gács Theorem \cite{lau83} that
$\BPP\subseteq\Sigma^p_2\cap\Pi^p_2$, the fact that the fragment $\Sigma^1_2$
of second-order logic captures $\Sigma^p_2$ \cite{fag74,sto77}, and the observation that
$\BP\FO\leqq\BPP$ imply the following:

\begin{obs}\hspace{4cm}
\label{obs:sipsergacs}
  $
  \BP\FO\leqq\Sigma^1_2.
  $
\end{obs}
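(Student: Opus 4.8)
The plan is to obtain $\BP\FO\leqq\Sigma^1_2$ by concatenating three facts: (1) $\BP\FO$ is contained in $\BPP$, that is, every Boolean $\BP\FO$-definable query is decidable in $\BPP$; (2) $\BPP\subseteq\Sigma^p_2$ by the Sipser-G\'acs Theorem \cite{lau83}; (3) $\Sigma^1_2$ captures $\Sigma^p_2$ \cite{fag74,sto77}. Chaining these, any Boolean $\BP\FO$-definable query is decidable in $\BPP$, hence in $\Sigma^p_2$, hence (since $\Sigma^1_2$ captures $\Sigma^p_2$) definable in $\Sigma^1_2$. This is $\BP\FO\leqq\Sigma^1_2$ for Boolean queries, and a standard marking trick lifts it to queries of arbitrary arity. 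Of the three ingredients only (1) carries real content, (2) and (3) being quoted.

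For (1), first I would exhibit a $\BPP$-algorithm. Let $\phi\in\FO[\tau\cup\rho]$ have a $(1/3,2/3]$-gap and define the Boolean $\tau$-query $\CQ:=\CQ^{\BP\FO}_\phi$, with $\rho=\{R_1,\dots,R_k\}$ and $R_i$ of arity $r_i$. On input $s(B)$, where $B$ is an ordered $(\tau\cup\{\les\})$-expansion of a $\tau$-structure $A$ with $n:=|V(A)|$, the machine recovers $A=B|_\tau$, tosses $\sum_{i=1}^k n^{r_i}$ independent fair coins to build a uniformly random expansion $X\in\CX(A,\rho)$, runs the fixed polynomial-time first-order model-checking procedure for $\phi$ on $X$, and accepts iff $X\models\phi$. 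The coin budget is polynomial in $n$, so the machine runs in polynomial time, and its acceptance probability on $s(B)$ equals $\Pr_{X\in\CX(A,\rho)}(X\models\phi)$. Since $\phi$ and the expansion process do not refer to $\les$, this quantity depends on $A$ alone, so the machine decides $L(\CQ_{\les})$; by the gap it accepts with probability $>2/3$ when $A\in\CQ$ and with probability $\le1/3$ when $A\notin\CQ$, which is exactly the $\BPP$ acceptance condition. Hence $\CQ$ is decidable in $\BPP$.

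For queries of arity $k\ge1$, first I would reduce to the Boolean case. Given a $k$-ary $\phi(x_1,\dots,x_k)\in\FO[\tau\cup\rho]$ with a $(1/3,2/3]$-gap, introduce fresh unary symbols $P_1,\dots,P_k$ and let $\chi\in\FO[\tau\cup\rho\cup\{P_1,\dots,P_k\}]$ express ``each $P_i$ is a singleton, and the tuple $\vec a$ with $P_i=\{a_i\}$ satisfies $\phi$''. On expansions in which all $P_i$ are singletons the satisfaction probability of $\chi$ equals $\Pr_X(X\models\phi[\vec a])$, and it is $0$ on all other expansions, so $\chi$ again has a $(1/3,2/3]$-gap; by the Boolean case $\chi$ is equivalent to a $\Sigma^1_2$-sentence $\psi$ over $\tau\cup\{P_1,\dots,P_k\}$. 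Replacing every atom $P_iy$ in $\psi$ by the first-order formula $y=x_i$ yields a $\Sigma^1_2$-formula in the free variables $x_1,\dots,x_k$ defining $\CQ_\phi$, since substituting first-order definitions for relation symbols does not raise the second-order quantifier-prefix complexity.

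I do not expect a genuine obstacle; the care needed is purely in aligning conventions. Two points are worth checking. First, the notion of ``capturing'' used in the paper runs through ordered encodings ($\CQ_{\les}$), so the argument implicitly relies on the standard fact that the $\Sigma^1_2$-sentence furnished by \cite{fag74,sto77} supplies its own linear order via an existential second-order quantifier; correspondingly the $\BPP$-machine above gives an order-independent answer, as noted. Second, the $(1/3,2/3]$-gap of $\BP\FO$-formulas matches the $\ge 2/3$ and $\le 1/3$ thresholds in the definition of $\BPP$ (indeed $>2/3$ implies $\ge2/3$), so no probability amplification is needed and $\BP\FO\subseteq\BPP$ is immediate rather than requiring the amplification lemma.
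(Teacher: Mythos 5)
Your proposal is correct and follows exactly the route the paper takes: it derives the observation by chaining $\BP\FO\leqq\BPP$, the Sipser-G\'acs Theorem $\BPP\subseteq\Sigma^p_2$, and Fagin--Stockmeyer capturing of $\Sigma^p_2$ by $\Sigma^1_2$. The paper states this chain in a single sentence without further proof, so your write-up simply supplies the routine details (the $\BPP$-algorithm for a gapped formula and the reduction of $k$-ary queries to the Boolean case) that the paper leaves implicit.
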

We will use Lautemann's proof of the Sipser-Gács Theorem in
section~\ref{sec:bpfovsmsoplus} in the context of monadic second-order
logic.

We close this section by observing that randomised logics
\emph{without} probability gaps are considerably more powerful than
their non-randomised counterparts:
\begin{obs}
Let $\mathcal{K}$ be a class of finite structures such that there is
a first-order formula $\varphi_c(x)$ defining a single element in each
structure of $\mathcal{K}$. Then every $\Sigma_1^1$-query on
$\mathcal{K}$ can be defined in $\logic{PFO}$.
\end{obs}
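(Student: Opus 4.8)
The plan is to adapt the classical proof that $\NP\subseteq\logic{PP}$: ``flip one fair coin; on heads accept outright, on tails guess a certificate at random and verify it''. Write the given $\Sigma_1^1$-query as $\psi(\vec x)=\exists S_1\cdots\exists S_m\,\theta(\vec x,S_1,\ldots,S_m)$ with $\theta$ first-order, and choose a vocabulary $\rho=\{R_0,R_1,\ldots,R_m\}$ disjoint from $\tau$, where $R_0$ is unary and $R_i$ has the same arity as $S_i$ for $1\le i\le m$. Replacing each $S_i$ by $R_i$ turns $\theta$ into a first-order formula $\theta(\vec x,R_1,\ldots,R_m)$ over $\tau\cup\{R_1,\ldots,R_m\}$. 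Since a random expansion $X\in\CX(A,\rho)$ realises every possible interpretation of $R_1,\ldots,R_m$ with strictly positive probability, the key observation is that for every $\tau$-structure $A$ and every tuple $\vec a$ of matching length,
\[
\Pr_{X\in\CX(A,\rho)}\big(X\models\theta[\vec a,R_1,\ldots,R_m]\big)>0\iff A\models\psi[\vec a],
\]
and this probability equals $0$ otherwise.

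This probability may be far below $1/2$, so $\theta(\vec x,\ldots)$ on its own is useless as a $\P\FO$-formula; this is where the hypothesis is used. Put $\phi_0:=\exists x\,(\varphi_c(x)\wedge R_0x)$. On a structure $A\in\mathcal K$ the formula $\varphi_c$ defines a unique element $c_A$, so ``$X\models\phi_0$'' is the single independent coin flip ``$c_A\in R_0(X)$'', whence $\Pr_{X\in\CX(A,\rho)}(X\models\phi_0)=1/2$ exactly; moreover this event is independent of $\theta[\vec a,R_1,\ldots,R_m]$ because the two formulas mention disjoint random symbols. Setting
\[
\phi(\vec x):=\phi_0\vee\theta(\vec x,R_1,\ldots,R_m)\in\FO[\tau\cup\rho],
\]
independence gives, for all $A\in\mathcal K$ and all $\vec a$,
\[
\Pr_{X\in\CX(A,\rho)}\big(X\models\phi[\vec a]\big)=\tfrac12+\tfrac12\,\Pr_{X\in\CX(A,\rho)}\big(X\models\theta[\vec a,R_1,\ldots,R_m]\big).
\]

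Finally, the gap condition of $\P\FO=\P_{(1/2,1/2]}\FO$ (``$\le 1/2$ or $>1/2$'') is vacuous, so $\phi$ is automatically a $\P\FO$-formula; combining the displayed identity with the key observation yields, for all $A\in\mathcal K$ and all $\vec a$,
\[
\vec a\in\CQ^{\P\FO}_\phi(A)\iff\Pr_{X\in\CX(A,\rho)}(X\models\phi[\vec a])>\tfrac12\iff A\models\psi[\vec a],
\]
so $\phi$ defines the query $\psi$ on $\mathcal K$; its behaviour on structures outside $\mathcal K$, where $\varphi_c$ need not define an element, is irrelevant. The only genuinely delicate point is manufacturing a random event of probability \emph{exactly} $1/2$ that is stochastically independent of the guessing relations $R_1,\ldots,R_m$: a formula defining two or more elements would push $\Pr(\phi_0)$ above $1/2$ and create false positives, while one defining no element would collapse $\Pr(\phi)$ back to the possibly minuscule $\Pr(\theta[\vec a,\ldots])$ and create false negatives. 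This is precisely the role of the single-element hypothesis; everything else is a routine marginalisation and independence calculation.
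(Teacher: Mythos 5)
Your proof is correct and follows essentially the same route as the paper: replace the second-order quantified relations by random relations, and disjoin with $\exists x(\varphi_c(x)\wedge R_0x)$, which fires with probability exactly $1/2$ independently of the guessing relations, so that the total acceptance probability exceeds $1/2$ precisely when the existential witness exists. The paper states this in three lines; you have merely made the independence calculation and the role of the single-element hypothesis explicit.
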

\begin{proof}
  Let $\varphi$ be a $\Sigma_1^1$-query on $\mathcal{K}$, i.e.,
  $\varphi$ is of the form $\exists X_1\cdots \exists X_k \psi$, where
  the $X_i$ are relation variables and $\psi$ is first-order. We
  replace each of the $X_i$ by a random relation $R_i$ of the same
  arity to get a new sentence $\varphi'$ and introduce an extra unary
  random relation $R_0$. Then $\varphi$ is equivalent to the
  $\logic{PFO}$-sentence
  \[
  \exists x(R_0 x \wedge \varphi_c(x)) \vee \varphi',
  \]
  because the first part is satisfied with probability exactly $1/2$.
\end{proof}

Toda's Theorem \cite{tod91} that the polynomial hierarchy is
contained in $\logic{P}^{\logic{PP}}$ suggests that, in fact, every
second-order query is definable in $\logic{PFO}$. However, Toda's
proof does not carry over easily to the $\logic{PFO}$-case.
Observation~\ref{obs:emptyvocab} suggests that some technical condition
such as definability of an element of the structure is necessary to
separate $\logic{PFO}$ from $\FO$ at all. One example of such a class
$\mathcal{K}$ is the class of all ordered structures, with
$\varphi_c(x)$ defining the minimum element.

\section{Separation results for $\BPFO$}
\label{sec:separation}

In this section we study the expressive power of the randomised logics
$\RFO$, $\logic{co-RFO}$, and $\BPFO$. Our main results are the
following:
\begin{iteMize}{$\bullet$}
\item $\RFO$ is not contained in $\Cinf$
\item $\BPFO$ is not contained in $\MSO$ on ordered structures
\item $\RFO$ is stronger than $\FO$ on additive structures
\end{iteMize}
A forteriori, the first and the third result also hold with $\BPFO$
instead of $\RFO$, and the constructions used in their proofs are also
definable in $\logic{co-RFO}$.

It turns out that we need three rather different queries to get these
separation results. For the first two queries this is obvious, because \emph{every} query on ordered structures is definable in
$\Cinf$. The third query (on additive structures) is readily seen to
be definable in $\MSO$. In fact, in
Section~\ref{sec:bpfovsmsoplus} we show the following:
\begin{iteMize}{$\bullet$}
\item Any $\BPFO$-definable query on additive structures can be
defined in $\MSO$.
\end{iteMize}

\subsection{$\RFO$ is not contained in $\Cinf$}
Formulas of the logic $\Cinf$ may contain arbitrary (not necessarily finite)
conjunctions and disjunctions, but only finitely many
variables, and counting quantifiers of the form
$
\exists^{\geq n}x\;\varphi
$
(``there exists at least $n$ $x$ such that $\phi$'').
For example, the class of finite
structures of even cardinality can be defined in this logic by the
sentence
\[
\bigvee_{k\geq 0}\left(\exists^{\geq 2k}x\,x\dot{=}x\right)\wedge \neg
\left(\exists^{\geq 2k+1}x\,x\dot{=}x\right).
\]

\begin{thm}\label{theo:bpfo-cinf}
  There is a class $\mathcal{TCFI}$ of structures that is definable in
  $\RFO$ and $\logic{co-RFO}$, but not in $\Cinf$.
\end{thm}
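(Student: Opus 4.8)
The plan is to realise $\mathcal{TCFI}$ through the Cai--Fürer--Immerman (CFI) construction, which is the standard device for defeating finite-variable counting logic. Recall that the CFI construction attaches to a connected base graph $G$ two structures, an ``even'' version $X^{0}(G)$ and an ``odd'' version $X^{1}(G)$; these are locally isomorphic and agree on all properties expressible with boundedly many variables and counting, yet they are non-isomorphic, the invariant being the parity of the number of ``twisted'' edges. I would take $\mathcal{TCFI}$ to consist of (encodings of) the even structures over a suitable family $(G_n)_{n\in\N}$ of base graphs --- bounded-degree graphs of treewidth $\omega(1)$, equipped with whatever auxiliary built-in structure (for instance a linear order on the base vertices, and possibly reachability data along a fixed spanning tree, hence the ``$\mathcal{T}$'') the positive part of the argument requires --- so that the odd structures $X^{1}(G_n)$ are precisely the near-misses just outside $\mathcal{TCFI}$.

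For the negative half I would invoke the Cai--Fürer--Immerman theorem directly: over base graphs $G_n$ of unbounded treewidth no fixed number of variables distinguishes $X^{0}(G_n)$ from $X^{1}(G_n)$, so any $\Cinf$-sentence true in all $X^{0}(G_n)$ is true in $X^{1}(G_n)$ for infinitely many $n$; since $X^{1}(G_n)\notin\mathcal{TCFI}$, no $\Cinf$-sentence defines the class, i.e.\ $\mathcal{TCFI}$ is not definable in $\Cinf$. The one thing to check is that the auxiliary built-in structure does not help counting logic, but the ordered-base-graph form of the CFI theorem --- the one used to separate $\IFPC$ from $\P$ --- already guarantees this, and one keeps the extra gadgetry inside its scope.

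For the positive half the aim is a single $\FO$-sentence $\varphi$, over the input vocabulary extended by fresh ``random'' relation symbols, such that $\Pr_X(X\models\varphi)$ is bounded away from $0$ on the even structures and equal to $0$ on the odd ones; the probability-amplification lemma (applicable because $\FO$ is closed under finite conjunctions and disjunctions) then turns this into the gap required by $\RFO$, and the symmetric construction --- a $\varphi$ that, with probability bounded away from $0$, certifies \emph{oddness} --- puts the complement of $\mathcal{TCFI}$ in $\RFO$, hence $\mathcal{TCFI}\in\logic{co-RFO}$. The natural candidate for $\varphi$ is ``guess and verify'': read the values of the random relations as a candidate consistency certificate for the CFI constraint system (a $\GFtwo$-assignment to the gadget variables) and let $\varphi$ state, gadget by gadget, that every local constraint is satisfied. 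On $X^{1}(G_n)$ no certificate exists, so the probability is $0$; on $X^{0}(G_n)$ a certificate exists, so it is positive.

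The main obstacle, and the reason $\mathcal{TCFI}$ cannot simply be plain CFI over an expander, is that this positive probability must be a constant independent of $n$. The valid certificates of a CFI instance --- whether one takes primal solutions of the $\GFtwo$-system or dual witnesses of its unsatisfiability --- form an affine subspace of exponentially small relative size, so a single uniform guess succeeds with probability only $2^{-\Omega(n)}$, and the amplification lemma boosts constant gaps, not vanishing ones (the sentence $\varphi$ is fixed, it is not allowed to grow with the input). The construction therefore has to engineer enough ``local freedom'' into the gadgets, and/or supply just enough built-in structure --- parity-along-the-tree data that $\FO$ may consult pointwise but that is too weak to let counting logic read off the global parity --- so that a correct certificate can be assembled from a bounded-size random seed while the treewidth stays large. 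Reconciling these two demands is the crux; the remaining ingredients, namely the $\FO$-formalisation of ``locally consistent certificate'', the amplification, and the encoding of structures as strings, are routine.
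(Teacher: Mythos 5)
Your negative half is fine and matches the paper: the class is built by the CFI construction over connected $3$-regular base graphs, the $\Cinf$ lower bound is quoted from Cai--F\"urer--Immerman, and one only has to check that the auxiliary built-in structure is rigid so that it does not change the automorphism group. But your positive half has a genuine gap, and it is exactly the one you name yourself and then leave unresolved. Guessing a certificate for the CFI constraint system with uniformly random relations succeeds with probability $2^{-\Omega(n)}$ on the even structures; since the $\RFO$-sentence is fixed, amplification cannot boost a vanishing acceptance probability into a constant gap, so ``guess and verify'' does not yield an $\RFO$-definition. Saying that one must ``engineer enough local freedom'' or add ``parity-along-the-tree data'' is not a construction; the theorem is not proved until that crux is actually resolved, and the resolution is not a variant of certificate-guessing at all.

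The paper uses randomness for a different purpose: \emph{symmetry breaking} inside each two-element edge group, not guessing a global witness. The structures in $\mathcal{TCFI}$ carry, besides the CFI gadgets, an FO-axiomatisable rigid Boolean algebra whose linearly ordered atoms are matched (via an equivalence relation $\sim$) one-to-one with the edgets and centre groups. A random binary relation $R$ assigns to each vertex an $m$-bit label ($m$ = number of atoms), and an FO-formula picks from each pair the vertex with the lexicographically smaller label; this succeeds for \emph{all} pairs simultaneously with probability at least $1-m2^{-m}\to 1$ by a union bound. Once one representative per pair is fixed, twistedness is a deterministic FO-computation: the Boolean algebra supplies both mod-$2$ counting (via the predicate $O$ marking its even-cardinality members) and quantification over sets of gadgets (which is also how connectedness of the base graph is expressed in FO). Finally, whether the random labels failed to separate some pair is itself FO-checkable, so the formula can default to reject (resp.\ accept) in that event, giving one-sided error and hence membership in both $\RFO$ and $\logic{co\text{-}RFO}$. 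This leader-election use of randomness, together with the rigid counting gadget, is the missing key idea.
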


Recall that by Observation~\ref{obs:emptyvocab} there also is a
class of structures definable in $\FOC\le\Cinf$, but not in $\BPFO$.

Our proof of Theorem~\ref{theo:bpfo-cinf} is based on a well-known
construction due to Cai, Fürer, and Immerman \cite{cfi92}, who gave an
example of a Boolean query in $\PTIME$ that is not definable in
$\Cinf$. We modify their construction in a way reminiscent to a proof
by Dawar, Hella, and Kolaitis \cite{dhk95} for results on implicit
definability in first-order logic, and obtain a query $\mathcal{TCFI}$
definable in $\logic{(co-)RFO}$, but not in $\Cinf$. Just like in Cai,
Fürer and Immerman's original proof, the reason why $\Cinf$ can not
define our query $\mathcal{TCFI}$ is its inability to choose one
out of a pair of two elements. Using a random binary relation this can
-- with high probability -- be done in $\FO$.

We first review the construction of \cite{cfi92} and then show how to
modify it to suit our needs.  Given a graph $G = (V,E)$, Cai et
al. construct a new  graph $G'$, replacing all vertices and
edges of $G$ with certain gadgets. We shall call graphs $G'$
resulting in this fashion \emph{CFI-graphs}, and will from now on
restrict ourselves to connected 3-regular graphs $G$ and CFI-graphs resulting
from these.

\begin{figure}[htb]
  \begin{center}
    \input{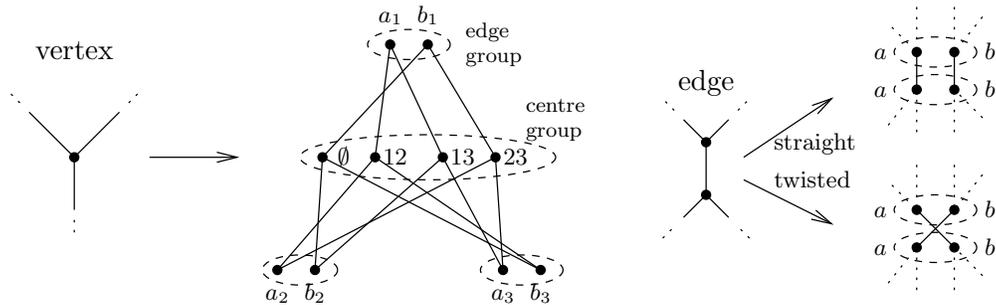}
  \end{center}
  \caption{The gadgets for CFI-graphs. Dashed ellipses indicate groups
    of equivalent vertices. Vertex labels are not part of the actual
    structure.}  \label{fig:cfigadget}
\end{figure}

The construction is as follows: For each vertex in $G$, we place a
copy of the gadget shown on the left of Figure~\ref{fig:cfigadget} in
$G'$. It has a group of four nodes (henceforth called \emph{centre
  nodes}) plus three pairs of nodes, which are to be thought of as ends
of the three edges incident with that node. For the time being, we
think of the pairs as ordered from $1$ to $3$ and distinguish between
the two nodes in each pair, say one of them is the $a$-node, the other
one being the $b$ node. Each of the four centre nodes is connected to
one node from each pair, and each of them to an even number of
$a$'s. To illustrate this, the centre nodes are labelled with the even
subsets of $\{1, 2, 3\}$. 
We also introduce an equivalence relation (or colouring, if you like)
of nodes as shown in Figure~\ref{fig:cfigadget}, so any isomorphism of
the gadget necessarily permutes nodes within each edge group and the
centre group.

\begin{figure}[htb]
  \begin{center}
    \input{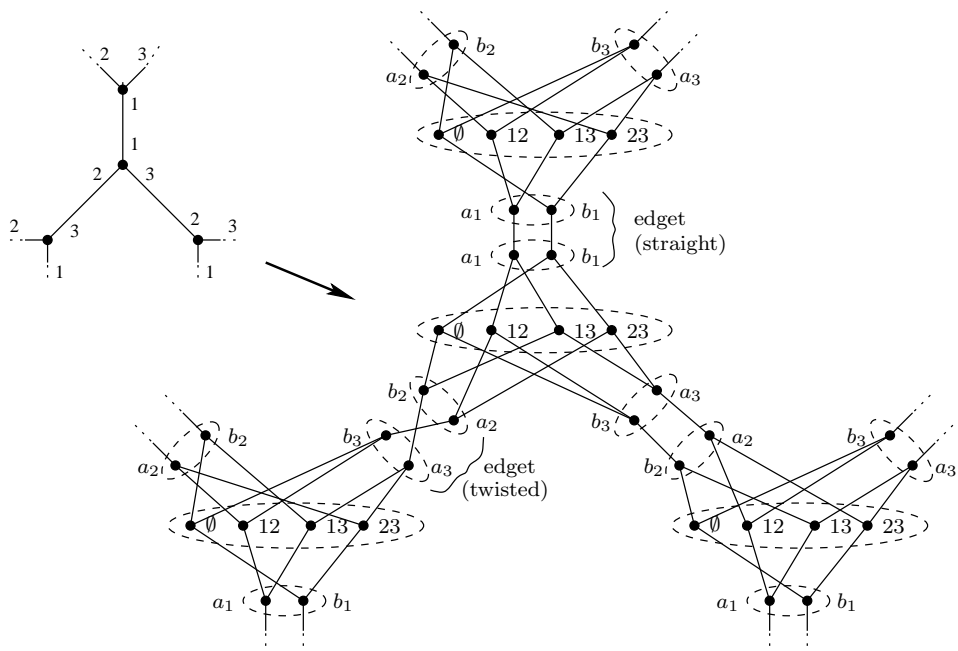}
  \end{center}
  \caption{The CFI-graph construction for a part of a graph. Edge and
    nodes labels are not part of the actual graph.}
  \label{fig:cfiexample}
\end{figure}

For each edge in $G$, we connect the $a$- and $b$-nodes in the
corresponding pairs as shown on the right of
Figure~\ref{fig:cfigadget}. We say an edge is ``twisted'' if the
$a$-node of one pair is connected to the $b$-node of the other and
vice versa. This completes our construction of $G'$. For definiteness,
when we speak of an \emph{edge group} we mean an equivalence class of
size two, and by a \emph{centre group} we mean one of size four. An
\emph{edget} is a pair of edge groups which form an edge gadget as on
the right of Figure~\ref{fig:cfigadget}. Figure~\ref{fig:cfiexample}
shows the result of applying this construction to a small subgraph (a
vertex with its three neighbours).

Without the $a$- and $b$-labels, we cannot decide which of the edges
have been twisted. In fact there are only two isomorphism classes of
CFI-graphs derived from $G$, namely those with an even number of edges
twisted and those with an odd number (we call the latter ones
\emph{twisted} CFI-graphs). This relies on the fact that isomorphisms
of the gadget on the left of Figure~\ref{fig:cfigadget} are exactly
those permutations swapping an even number of $a$'s and $b$'s. Since
we assume $G$ to be connected, we can twist edges along a path between
two nodes adjacent to twisted edges, reducing the number of twisted
edges by two; cf.~\cite[Lemma~6.2]{cfi92} for details.

By~\cite[Thm.~6.4]{cfi92}, if the original graph $G$ has no separator of
size at most $s$ then the two isomorphism classes of CFI graphs
derived from it can not be distinguished by a sentence $\varphi \in
\Cinfs$, i.e., by a $\Cinf$ sentence with at most $s$ distinct
variables. In \PTIME, on the other hand, twisted CFI-graphs can easily
be recognised: Choose exactly one node from each edge group and label
this one $a$ and the other one $b$. A centre node is connected to an
even number of $a$'s if and only if all four nodes in its centre group
are. In this case we call the centre group even, otherwise we call it
odd. Then a CFI-graph is twisted if and only if
\[
\label{eqn:twisted}
(\text{number of odd centre groups} +
 \text{number of twisted edgets})\text{ is odd}.
\]

We aim for a $\logic{(co-)RFO}$-sentence which defines exactly the
twisted connected 3-regular CFI-graphs. In view of the above
\PTIME-algorithm, we are done if we can
\begin{iteMize}{$\bullet$}
\item express connectedness of the graph,
\item count edgets and centre groups modulo two and
\item choose one representative from each centre group, edge group and edget.
\end{iteMize}

For counting modulo two and to get representatives for centre groups
and edgets, we augment the structures with a Boolean algebra in the
following way:
Let $\tau$ be the
vocabulary $\{E,\sim, <,\sqsubseteq,P,O\}$, with unary $P$ and $O$,
and binary $E$, $\sim$, $<$ and $\sqsubseteq$. Let $\mathcal{CFI}$ be
the class of structures $A$ such that
\begin{iteMize}{$\bullet$}
\item $E$ defines a 3-regular, connected CFI-graph on $V(A)
  \setminus P(A)$,
\item $(P(A), \sqsubseteq)$ is a Boolean algebra $\mathfrak{B}$, and
  $O$ is true exactly for its members of even cardinality
\item $<$ defines a linear order on the set of atoms of
  $\mathfrak{B}$ (and no other element of $A$ is $<$-related to any
  other).
\item $\sim$ defines an equivalence relation, where each equivalence
  class 
  \begin{iteMize}{$-$}
  \item contains one atom of $\mathfrak{B}$ and the nodes of one edget
  \item or contains one atom of $\mathfrak{B}$ and the nodes of one
    centre group
  \item or consists of a single non-atom of $\mathfrak{B}$.
  \end{iteMize}
  In particular, the number of atoms of the Boolean algebra
  $\mathfrak{B}$ is equal to the number of edgets plus the number of
  centre groups. Note also that we can distinguish the two edge groups
  in an edget because only nodes in the same edge group are connected
  to nodes in the same centre group.
\end{iteMize}  

\begin{thm}
  The class $\mathcal{CFI}$ is definable in $\FO$. The subclass
  $\mathcal{TCFI}$ of twisted CFI-graphs is definable in $\BPFO$
  but not in $\Cinf$.
\end{thm}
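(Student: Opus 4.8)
The plan is to prove the three claims in turn: $\FO$-definability of $\mathcal{CFI}$, then $\mathcal{TCFI}\in\BPFO$, then $\mathcal{TCFI}\notin\Cinf$. For the first, every clause in the definition of $\mathcal{CFI}$ is first-order over $\tau=\{E,\sim,<,\sqsubseteq,P,O\}$: $3$-regularity of $E$ on $V(A)\setminus P(A)$ and the prescribed gadget pattern (four centre nodes each joined to one node of each of three edge pairs, edge gadgets joining two pairs straight or twisted) are conditions on bounded-radius neighbourhoods; the Boolean-algebra axioms for $(P(A),\sqsubseteq)$ are first-order, and since a finite Boolean algebra is a powerset algebra, ``$O(y)$ iff an even number of atoms lies below $y$'' is first-order; ``$<$ linearly orders the atoms'' and the description of the $\sim$-classes are plainly first-order. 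From these one also recovers in $\FO$ the edge groups, centre groups and edgets (only nodes of the same edge group see a common centre group) and whether an atom is $\sim$-linked to an edget or to a centre group. The only condition that is not literally first-order is connectedness of $E$; since the twist parity used below is $\FO$-computable and meaningful without it, I take $\mathcal{CFI}$ to be the $\FO$-definable class given by the remaining conditions and $\mathcal{TCFI}$ to be its subclass of structures of odd twist parity, which on connected CFI-graphs coincides with the twisted ones. Call the resulting $\FO$-sentence $\varphi_{\mathcal{CFI}}$.

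For $\mathcal{TCFI}\in\BPFO$, randomness is used only to pick one node (``$a$-node'') from each size-two edge group; the Boolean algebra then counts modulo two. Add a ternary random relation $R$ and regard the atoms, linearly ordered by $<$, as indices of independent trials: trial $\alpha$ \emph{decides} an edge group $\{u,v\}$ if exactly one of $R(u,v,\alpha)$, $R(v,u,\alpha)$ holds, the decided side being the $a$-node. Each trial decides a fixed edge group independently with probability $\tfrac12$, so some trial decides it with probability $1-2^{-(\#\text{atoms})}$, and a union bound over the linearly many edge groups shows that with probability tending to $1$ all are decided; by indexing trials with tuples of atoms one makes this probability exceed $\tfrac23$ for every structure in $\mathcal{CFI}$. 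On that event the labelling chosen by the $<$-least deciding trial is $\FO$-definable from $R$, and from it $\FO$ defines the odd centre groups (a centre node sees oddly many $a$-nodes among its three neighbours), the twisted edgets, and via $\sim$ and $\sqsubseteq$ the unique $s\in P(A)$ whose atoms are exactly those $\sim$-linked to an odd centre group or a twisted edget. The key combinatorial fact is that flipping the $a$-node of one edge group toggles exactly one edget and exactly one adjacent centre group, hence does not change the parity of (odd centre groups) $+$ (twisted edgets); so that parity, equivalently $\neg O(s)$, is independent of $R$ and, by the characterisation of ``twisted'' recalled before the theorem, equals ``the CFI-graph is twisted''. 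Therefore
\[
\varphi_{\mathcal{CFI}}\;\wedge\;(\text{every edge group is decided})\;\wedge\;\neg O(s)
\]
holds with probability tending to $1$ on structures in $\mathcal{TCFI}$ and with probability $0$ on all others, so it has a $(0,\tfrac23]$-gap and defines $\mathcal{TCFI}$, giving $\mathcal{TCFI}\in\RFO$; the symmetric variant (default to true on an undecided edge group) gives $\mathcal{TCFI}\in\logic{co-RFO}$, hence $\mathcal{TCFI}\in\BPFO$.

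For $\mathcal{TCFI}\notin\Cinf$, suppose it were defined by a $\Cinf$-sentence; it then lies in $\Cinfs$ for some $s$. Choose a connected $3$-regular graph $G$ with no separator of size $\le s$ and an edge $e_0$, and build two $\tau$-structures $A$ and $B$ on a common universe that agree on $\sim,<,\sqsubseteq,P,O$ and on every gadget except that the edget of $e_0$ is straight in $A$ and twisted in $B$, with the Boolean-algebra part chosen so that $A,B\in\mathcal{CFI}$. Then $B\in\mathcal{TCFI}$ while $A\notin\mathcal{TCFI}$, so it suffices to show $A\equiv_{\Cinfs}B$. By~\cite[Thm.~6.4]{cfi92}, Duplicator wins the bijective $s$-pebble game on the two CFI-\emph{graphs}, and her strategy respects the gadget decomposition, mapping each gadget onto the corresponding one and only choosing coherently a ``local sign'' inside each. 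This strategy also wins on the enriched structures: the order $<$ makes the induced substructure on $P(A)$ rigid, and it is identical in $A$ and $B$, so Duplicator is forced to — and may — play the identity on $P(A)$; and since $\sim$ is also identical and each $\sim$-class is an atom together with the four nodes of one gadget, the ``gadget-to-itself'' graph strategy respects $\sim$ automatically. Hence Duplicator maintains a partial isomorphism of the full $\tau$-structures, so $A\equiv_{\Cinfs}B$ — contradicting that the $\Cinf$-sentence separates $A$ from $B$.

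The main obstacle is this last step. The CFI core is quoted, and the $\FO$-definability and the $\BPFO$ construction are routine once one sees that the twist parity is labelling-independent; the real issue is to verify that the extra built-in structure — above all the \emph{linear order on the atoms}, which is a linear order on edgets and centre groups — does not help $\Cinf$ locate the twist. The rigidity observation settles this: Duplicator's CFI strategy never moves a gadget, so it respects that order (and $\sim$, and the Boolean algebra) for free.
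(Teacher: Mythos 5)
Your $\BPFO$ upper bound and your $\Cinf$ lower bound are essentially the paper's argument: you pick one node per edge group using the random relation (the paper encodes an $m$-bit number per vertex and takes the smaller; your ``atom-indexed trials'' are an equivalent device with the same union-bound analysis), you use the $O$-predicate of the Boolean-algebra element collecting the odd centre groups and twisted edgets, and you dispose of the $\Cinf$ direction by quoting \cite[Thm.~6.4]{cfi92} together with the rigidity of the ordered Boolean algebra. Those parts are fine.

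There is, however, a genuine gap in the first claim. The class $\mathcal{CFI}$ as defined requires $E$ to be a \emph{connected} $3$-regular CFI-graph, and the theorem asserts that this class is $\FO$-definable. You notice that connectedness is the problematic clause and then simply drop it, redefining $\mathcal{CFI}$ and $\mathcal{TCFI}$; that proves a different statement, and it also forces you to replace ``twisted'' (which is defined via the two isomorphism classes of CFI-graphs over a \emph{connected} base graph) by ``odd twist parity''. The point you are missing --- and the one subtlety the paper's proof actually addresses --- is that connectedness \emph{is} first-order expressible here, precisely because of the extra structure: each atom of $\mathfrak{B}$ is $\sim$-linked to a centre group or an edget, and an arbitrary element $s$ of the powerset algebra $(P(A),\sqsubseteq)$ encodes, via ``atom $\sqsubseteq s$'', an arbitrary set of centre groups. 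So ``the base graph is connected'' becomes the first-order sentence ``there is no $s$ whose set of centre groups is nonempty, proper, and not joined by any edget to a centre group outside it''. With that clause restored you recover the theorem exactly as stated, and the rest of your argument goes through unchanged.
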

\begin{proof}
That $\mathcal{CFI}$ is definable is easy to establish, the only
subtlety  being that $\mathfrak{B}$ allows us to quantify over sets
of centre groups, which makes connectedness expressible.

The proof that $\mathcal{TCFI}$ is not definable in $\Cinf$ is the
same as in \cite{cfi92}; it is unaffected by the additional
structure. Note that because the atoms are ordered, the Boolean
algebra is rigid, i.e., it has no non-trivial automorphism, therefore
the isomorphism group of a CFI-graph is not changed by adding the
Boolean algebra.

It remains to show that twistedness can be defined in $\BPFO$. We
pick one vertex from each edge group by viewing a random binary
relation $R$ as assigning an $m$-bit number to each vertex, where $m$
is the number of atoms in the Boolean algebra. From each pair, we
choose the vertex with the smaller number, expressed by
\[
\xi(x) := %
\exists y\Big(%
x \sim y\wedge \exists
z\big(
\alpha(z) \wedge \neg Rxz \wedge Ryz \wedge \forall w
(w < z \rightarrow (Rxw \leftrightarrow Ryw))\big)\Big),
\]
where $\alpha(x)$ is an $\FO$-formula satisfied exactly by the
atoms of the Boolean algebra. It is easy to see that if the random
relation $R$ assigns a different set of atoms to the two vertices in
each edge group, then $\xi$ succeeds in picking
exactly one vertex from each edge group, and twistedness can then be checked
by looking at the $O$-predicate of the element of $\mathfrak{B}$ which
contains exactly the atoms equivalent to twisted centre groups or
twisted edgets.

To prove that the resulting formula has a large probability gap, we
need to establish a high probability of success only for structures in
the class $\mathcal{CFI}$, because this class is $\FO$-definable. But
in such structures, the probability that the two nodes of an edge
group are assigned the same number is $2^{-m}$, so by a union bound
the probability that we successfully pick one node from each group is
at least
\[
1-m2^{-m} \to 1
\]
because there are less than $m$ edgets. Furthermore, we can check in
$\FO$ whether there is an edge group whose members we can not
distinguish, and choose to invariably reject or accept in these cases,
resulting in an $\RFO$ or $\logic{co-RFO}$ sentence, respectively.
\end{proof}

\subsection{$\BPFO$ on ordered structures is not contained in $\MSO$}
\label{sec:bpfo}

In the presence of a linear order, \emph{any} query becomes
definable in $\Linf$, and the query $\mathcal{TCFI}$ becomes
definable even in $\FO$. However, randomisation adds expressive power to
\FO\ also on ordered structures:

\begin{thm}\label{theo:ord}
  There is a class
  $\mathcal B$ of ordered structures that is definable in $\BPFO$, but not in $\MSO$.
\end{thm}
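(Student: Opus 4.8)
The plan is to build a class $\mathcal B$ of ordered structures that encodes a ``majority of random bits'' phenomenon which $\BPFO$ can test but $\MSO$ provably cannot. The natural candidate is a parity-style query: on a structure with universe $[0,n-1]$, carrying the built-in order together with a unary predicate $U$, take the random vocabulary to consist of a single binary relation $R$, which (using the order) we read as assigning to each element $x$ an $n$-bit string $R(x,\cdot)$, hence a number in $[0,2^n-1]$; better yet, use a single unary random relation $S$ and let the query be ``the number of $a\in U$ with $S(a)$ has the same parity as $|U|/2$'' — but parity is computable by $\MSO$ on ordered structures, so this will not separate. Instead I would aim for a query whose $\BPFO$-definition exploits randomness in an essential way: namely, a query asking whether two disjoint ordered ``halves'' of the structure are equal as subsets of $[0,n-1]$, which $\BPFO$ decides by the classical randomised equality/fingerprinting protocol (pick a random prime-like modulus encoded by a random relation and compare residues), while $\MSO$ on strings defines only regular languages and equality of two halves of a binary string is not regular. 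Concretely, let $\tau=\{\les, U\}$, let a $\tau$-structure of size $2n$ code a pair of binary strings $u,v\in\{0,1\}^n$ (say $U$ on the first $n$ positions gives $u$, on the last $n$ gives $v$), and let $\mathcal B$ be the class where $u=v$.

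The first step is to show $\mathcal B\in\BPFO$. I would introduce a random unary relation (or a random binary relation read as a random number $p\le n^{O(1)}$, definable with high probability as in the ``defines a number $\le\log^{O(1)}|A|$'' remark, or even a random $O(\log n)$-bit object) and use the Schwartz–Zippel / fingerprinting idea: interpret $u$ and $v$ as integers $\sum u_i 2^i$, $\sum v_i 2^i$, pick a random prime $p$ of $O(\log n)$ bits, and accept iff $u\equiv v \pmod p$. If $u=v$ this always holds; if $u\ne v$ then $u-v$ is a nonzero integer $<2^n$ with at most $n$ prime divisors, so a random prime among the first $n^2$ primes divides it with probability $<1/n$. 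The delicate point is that all arithmetic here — computing $u \bmod p$, computing $v\bmod p$, comparing — must be done in plain $\FO$ over the ordered structure expanded by the random relation. This is where I expect the main obstacle to lie: $\FO$ without built-in addition cannot multiply or reduce modulo $p$ directly. The fix is to let the random relation $R$ itself supply the ``certificate'' of the computation (the sequence of partial residues $r_i = (2r_{i-1}+u_i)\bmod p$), and have an $\FO$ sentence merely \emph{verify} the recurrence locally position-by-position; a random $R$ will, with probability $1-o(1)$, either encode the correct certificate or be rejected, and one amplifies using the Lemma. Getting the probability bookkeeping right so that the sentence has a genuine $(1/3,2/3]$-gap on \emph{all} $\tau$-structures (not just the well-formed ones) — using the trick from the previous proof of first checking $\FO$-definable well-formedness and defaulting to reject otherwise — is the real content.

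The second step is to show $\mathcal B\notin\MSO$. On ordered structures, $\MSO$-definable Boolean queries correspond exactly to regular languages over the encoding alphabet (this is the Büchi–Elgot–Trakhtenbrot theorem; it is the one genuinely ``black-box'' fact I would invoke, consistent with the paper's treatment of $\MSO$). The language $\{uu : u\in\{0,1\}^n\}$, suitably encoded, is a standard non-regular language — proved by a pumping-lemma or Myhill–Nerode argument, since distinct strings $u\ne u'$ of the same length must be Nerode-inequivalent as prefixes of the code. One must be slightly careful that the encoding $s(B)$ of the ordered structure (with its fixed canonical form for $\les$ and for $U$) is such that non-regularity of the abstract language transfers to non-regularity of $L(\mathcal B_\les)$; since the $\les$-part of the code is a fixed string depending only on $n$ and the $U$-part is exactly $uv$, this is routine. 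Combining the two steps gives a query in $\BPFO\setminus\MSO$ on ordered structures, which is Theorem~\ref{theo:ord}; as remarked after the theorem list, the construction will also live in $\logic{co-RFO}$ since the one-sided error is on the ``$u\ne v$'' side.
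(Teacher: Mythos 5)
Your $\MSO$ lower bound is fine (and via B\"uchi--Elgot--Trakhtenbrot it is arguably simpler than the paper's Feferman--Vaught argument), but the $\BPFO$ upper bound has a fatal gap. In $\BPFO$ the extra relations are \emph{uniformly random}, not existentially quantified, so your ``fix'' --- letting $R$ supply the certificate of the modular-reduction computation and having $\FO$ merely verify the recurrence --- does not work: a uniformly random relation encodes a valid sequence of partial residues only with probability roughly $p^{-n}$, i.e.\ exponentially small. Hence in the yes-case ($u=v$) your sentence accepts with exponentially small probability rather than probability $>2/3$, and after the ``reject if malformed'' default both the yes- and no-cases reject with probability $1-o(1)$, so there is no $(\alpha,\beta]$-gap for any constants $\alpha<\beta$. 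The paper's amplification lemma cannot rescue this: it uses a \emph{fixed} number of fresh copies of the random vocabulary (a single formula $\varphi^{(n,l)}$ with $n$ a constant), so it only boosts a constant gap to a better constant gap; it cannot amplify an exponentially small acceptance probability. (Guess-and-verify of certificates is exactly what distinguishes the unbounded-error logic $\logic{PFO}$, where the paper shows $\Sigma^1_1$ queries become definable, from $\BPFO$.) A secondary problem is that even recognising whether the random $O(\log n)$-bit modulus is prime, or carrying out any of the arithmetic, is not available in $\FO[\les]$ without built-in $+$ and $\times$. Finally, a small bookkeeping point: the paper's remark extends only the first and third separation results to $\logic{(co\text{-})RFO}$, not the ordered-structures one, and indeed a two-sided error seems hard to avoid here.

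The paper circumvents all of this by never asking $\FO$ to verify a computation. Its class $\mathcal B$ consists of ordered structures with a matched part $M$ and a Boolean-algebra part $N$, and the query is the size comparison $2^{|M|}\geq |N|^2$. The random relation is read as a random function $f:N\to\mathop{\mathrm{Pow}}(M)$, and the $\FO$ sentence checks only the trivially first-order property that $f$ is injective; by the birthday paradox the \emph{probability} of injectivity jumps across a constant threshold exactly at the size comparison in question. That is the idea your proposal is missing: the answer must be encoded in the probability of a simple $\FO$-checkable event, not in the correctness of a certificate that the random relation is supposed to guess.
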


Remember that monadic second-order logic MSO is the the fragment of
second-order logic that allows quantification over individual elements
and sets of elements.

\begin{figure}[tb]
  \label{fig:Kclass}
  \begin{center}
    \resizebox{0.7\textwidth}{!}{\input{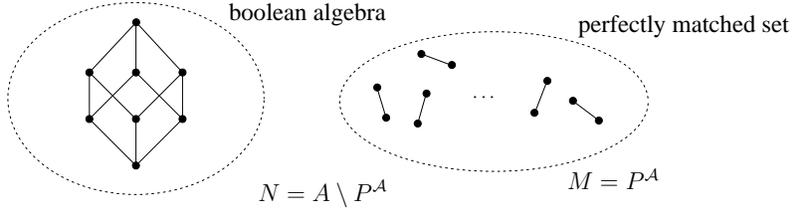}}
  \end{center}
  \caption{The structures in $\mathcal{B}$ contain a Boolean algebra
    and a perfectly matched set.}
\end{figure}
Let $\sigma_{EP\leq} := \{ \leq, E, P \}$, with binary relations
$\leq$ and $E$, and a unary predicate $P$. We define two classes
$\mathcal B'$, $\mathcal B$ of $\sigma_{EP\leq}$-structures
(cf. Figure~\ref{fig:Kclass}):

\noindent
$\mathcal B'$
  is the class of all $\sigma_{EP\leq}$-structures $A$ for
  which
  \begin{enumerate}[(1)]
   \item $E$ defines a perfect matching on the set $M := P(A)$
  \item the set $N := V(A) \setminus P(A)$ forms a Boolean
    algebra with the relation $E$ and
  \item no $x \in N$ and $y \in M$ are $E$-related
 \item $\leq$ defines a linear order on the whole structure, which puts
    the $M$ before the $N$ and orders $M$ in such a way that matched
    elements are always successive.
  \end{enumerate}
It is easy to see that the class $\mathcal B'$ is definable in \FO.
  $\mathcal B$ is the subclass of $\mathcal{B}'$ whose elements
  satisfy the additional condition
 \begin{equation}
     \label{eqn:sizecond}
    2^{\betrag{M}} \geq \betrag{N}^2.
 \end{equation}
 We will prove that $\mathcal B$ is definable in $\BPFO$, but not in $\MSO$.
 To prove that $\mathcal B$ is definable in $\BPFO$, we will use the following lemma:
\begin{lem}[Birthday Paradox]
\label{lem:birthday}
Let $m,n \geq 1$ and let $F : [n] \to [m]$ be a random function drawn
uniformly from the set of all such functions.
\begin{enumerate}[\em(1)]
\item For any $\epsilon_1 > 0$ and $c > 2 \logn \frac{1}{\epsilon_1}$ there is an $n_c \geq
1$ such that if $n > n_c$ and $m \leq \frac{n^2}{c}$ we have
\[
\Pr(F\text{ is injective}) \leq \epsilon_1
\]

\item For any $\epsilon_2 > 0$, if $m \geq \frac{n^2}{2\epsilon_2}$, then
\[
\Pr(F\text{ is injective}) \geq 1-\epsilon_2
\]
\end{enumerate}
\end{lem}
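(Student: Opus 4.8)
The plan is to compute $\Pr(F\text{ is injective})$ exactly and then bound it from both sides by elementary inequalities. Revealing the values $F(1),\dots,F(n)$ one at a time, the $(i{+}1)$-st value avoids the $i$ previously used values with probability $(m-i)/m$, so
\[
\Pr(F\text{ is injective}) = \prod_{i=0}^{n-1}\Bigl(1-\frac{i}{m}\Bigr).
\]

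For part~(1) I would apply the inequality $1-x\le e^{-x}$, valid for all real $x$, to the product, obtaining
\[
\Pr(F\text{ is injective}) \le \prod_{i=0}^{n-1} e^{-i/m} = \exp\Bigl(-\frac{n(n-1)}{2m}\Bigr).
\]
When $m\le n^2/c$ the exponent is at least $\frac{c(n-1)}{2n}=\frac{c}{2}\bigl(1-\frac1n\bigr)$, which increases to $c/2$ as $n\to\infty$. Since by hypothesis $\frac{c}{2}>\logn\frac{1}{\epsilon_1}$, there is a threshold $n_c$, depending only on $c$ and $\epsilon_1$ (via the monotonicity of $1-1/n$ and the strict inequality $c>2\logn\frac1{\epsilon_1}$), such that for all $n>n_c$ we have $\frac{c}{2}\bigl(1-\frac1n\bigr)\ge\logn\frac{1}{\epsilon_1}$; for such $n$ the displayed bound gives $\Pr(F\text{ is injective})\le e^{-\logn(1/\epsilon_1)}=\epsilon_1$.

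For part~(2) I would instead use a union bound over potential collisions: $F$ fails to be injective exactly when $F(i)=F(j)$ for some pair $i<j$ in $[n]$, and each such event has probability exactly $1/m$. Hence
\[
\Pr(F\text{ is not injective}) \le \binom{n}{2}\frac{1}{m} = \frac{n(n-1)}{2m} \le \frac{n^2}{2m} \le \epsilon_2
\]
whenever $m\ge n^2/(2\epsilon_2)$, so $\Pr(F\text{ is injective})\ge 1-\epsilon_2$. Every estimate here is a one-line standard inequality, so there is no genuine obstacle; the only point requiring a little care is to make the choice of $n_c$ in part~(1) explicit and uniform in $m$ and $n$, which is exactly what the strict inequality on $c$ buys us.
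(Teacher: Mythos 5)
Your proof is correct and follows essentially the same route as the paper: the exact product formula combined with $1-x\le e^{-x}$ for the upper bound, and a union bound over colliding pairs for the lower bound. You actually spell out the choice of $n_c$ in part (1) more explicitly than the paper does, but the argument is the same.
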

\begin{figure}[tb]
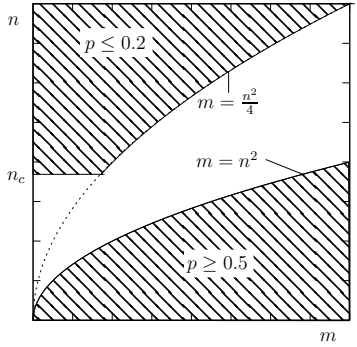

  \label{fig:birthday}
  \begin{center}
    \resizebox{0.3\textwidth}{!}{\input figs/birthday.pstex_t}
  \end{center}
  \caption{The Birthday Paradox with $\epsilon_1 = 0.2$, $\epsilon_2 =
    0.5$ and $c = 4$. Here, $p$ denotes $\Pr(f\text{ is injective})$.}
\end{figure}

\proof
For the first part, we note that
\[
\Pr(F\text{ injective})
  = \prod_{i = 0}^{n-1} \left(1-\frac i m\right)
\leq \prod_{i = 0}^{n-1} \exp\left(-\frac i m\right)
= \exp\left(-\frac{n(n-1)}{2m}\right).
\]
For the second part, note that
\[
\Pr(F\text{ not injective})
= \Pr\Big(F(i) = F(j) \text{ for all }i<j\Big)
\leq \sum_{i < j} \frac 1 m
= \binom{n}{2}\frac{1}{m}
\leq \frac{n^2}{2m}.\eqno{\qEd}
\]

\begin{proof}[Proof of Theorem~\ref{theo:ord}]
  To see that $\mathcal B$ is not definable in $\MSO$, we use two
  simple and well-known facts about $\MSO$. The first is that for
  every $q\ge0$ there are natural numbers $p,m$ such that for all
  $k\ge0$, a plain linear order of length $m$ is indistinguishable
  from the linear order of length $m+k\cdot p$ by $\MSO$-sentences of
  quantifier rank at most $q$. The same fact also holds for linear
  orders with a perfect matching on successive elements, because such
  a matching is definable in $\MSO$ anyway. The second fact we use is
  a version of the Feferman-Vaught Theorem
  (cf.~\cite[Thm.~1.5(ii)]{makowski04}):
  \begin{thm}
    Suppose two $\tau$-structures $U$ and $V$ satisfy the same
    $\MSO$-sentences of quantifier rank up to $q$, and let $W$ be
    another $\tau$-structure. Denote by $U\sqcup W$ (resp. $V\sqcup
    W$) the disjoint union of $U$ (resp. $V$) and $W$. Then
    $U\sqcup W$ and $V\sqcup W$ satisfy the same $\MSO$-sentences of
    quantifier rank up to $q$.
  \end{thm}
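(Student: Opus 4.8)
The plan is to prove this through the Ehrenfeucht–Fra\"iss\'e game characterisation of $\MSO$-equivalence up to a fixed quantifier rank. Recall the standard fact that ``$U$ and $V$ satisfy the same $\MSO$-sentences of quantifier rank at most $q$'' --- which I will abbreviate $U\equiv_q V$ --- holds if and only if the duplicator wins the $q$-round $\MSO$-game on $U$ and $V$. In this game the spoiler plays, in each round, either a \emph{point move} (an element of one structure, answered by an element of the other) or a \emph{set move} (a subset of one structure, answered by a subset of the other); the duplicator wins if after $q$ rounds the induced correspondence between the chosen elements is a partial isomorphism with respect to $\tau$ \emph{enriched} by the monadic predicates named by the chosen sets. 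Hence it suffices to show: if the duplicator wins the $q$-round game on $(U,V)$, then she wins the $q$-round game on $(U\sqcup W,\,V\sqcup W)$.

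First I would fix a winning strategy $\mathcal S$ for the duplicator on $(U,V)$ and build from it a \emph{composition strategy} $\mathcal S^{\sqcup}$ on $(U\sqcup W,\,V\sqcup W)$. The guiding idea is that $W$ sits literally inside both large structures, so the duplicator plays the identity on the $W$-parts and delegates all genuine work to $\mathcal S$ on the $U$/$V$-parts. Every subset $S\subseteq U\sqcup W$ splits uniquely as $S=S_U\cup S_W$ with $S_U\subseteq U$ and $S_W\subseteq W$, and likewise on the $V$-side. To a point move at an element of $W$ the duplicator responds with the identical element of $W$ in the other large structure; to a point move at an element of $U$ (resp.\ $V$) she consults $\mathcal S$; to a set move $S$ she answers $T_U\cup S_W$, where $T_U$ is the response $\mathcal S$ prescribes to the set move $S_U$ and the $W$-part $S_W$ is copied verbatim. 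Formally, $\mathcal S^{\sqcup}$ maintains a shadow play of $\mathcal S$ on $(U,V)$ consisting of the $U$/$V$-projections of the moves seen so far: each big-game round induces at most one shadow round (point moves landing in $W$ induce none), so the shadow play has at most $q$ rounds and is handled by the $q$-round strategy $\mathcal S$. Moves the spoiler makes on the $V\sqcup W$ side are treated symmetrically.

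Next I would verify that $\mathcal S^{\sqcup}$ is winning. After $q$ rounds the chosen elements on the $U\sqcup W$ side split into those in $U$ and those in $W$, and the response map sends the former into $V$ via $\mathcal S$ and the latter to themselves in $W$. That this map is a partial isomorphism for $\tau$ together with the named monadic predicates breaks into three cases. For two elements both in the $U$/$V$ components, preservation of the $\tau$-relations and of membership in each named set holds because $\mathcal S$ was winning and because every set move's $U$-part $S_U$ was answered by $\mathcal S$. For two elements both in $W$ the map is the identity, so everything is preserved trivially, including membership in the verbatim-copied $W$-parts $S_W$. The only remaining case is the mixed one, one element in a $U$/$V$ component and one in $W$: here membership in a monadic predicate is a unary condition already settled by the two homogeneous cases, and for a relation of arity at least two the disjoint union has no tuple with arguments in both components, so no such relation can separate the map from an isomorphism.

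The main obstacle to keep honest is precisely this interaction of set moves with the disjoint-union structure: one must be sure that splitting each named predicate as $S_U\cup S_W$ and answering the halves independently yields a partial isomorphism for the \emph{enriched} vocabulary, not merely for $\tau$. The cleanest way to guarantee this is to observe that in $U\sqcup W$ each $\tau$-relation is the disjoint union of its restrictions to $U$ and $W$ (no cross tuples of arity $\ge 2$), so the enriched structure is itself the disjoint union of $(U,\vec P_U)$ and $(W,\vec P_W)$, where $\vec P$ lists the named predicates. The partial-isomorphism condition then factors cleanly over the two components, reducing the winning condition for $\mathcal S^{\sqcup}$ to that for $\mathcal S$ on the $U$/$V$ side and to triviality on the $W$ side; with this observation in hand the three-case check above is routine.
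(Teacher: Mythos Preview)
Your argument is correct: the composition strategy that plays the identity on the $W$-side and delegates the $U$/$V$-side to a winning strategy for the assumed game on $(U,V)$ is the standard way to establish Feferman--Va\-ught-type preservation for $\MSO$ over disjoint unions, and your verification of the winning condition (splitting atomic checks into the two homogeneous cases and the cross case, the latter being vacuous for relations of arity $\ge 2$ and reducing to the homogeneous cases for unary predicates) is complete. The one point worth stating a shade more explicitly is that the shadow play on $(U,V)$ is a \emph{legal} play of at most $q$ rounds (set moves always project, point moves in $W$ are simply omitted), so that the $q$-round strategy $\mathcal S$ is indeed applicable at every step; you say this, but it is the hinge of the argument.

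As for comparison with the paper: the paper does not prove this theorem at all. It is quoted inside the proof of Theorem~\ref{theo:ord} as a known fact, with a reference to Makowski's survey (Thm.~1.5(ii) there), and used as a black box. So your Ehrenfeucht--Fra\"iss\'e proof is not a different route to the same end but rather a self-contained proof where the paper is content to cite.
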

  The theorem also holds for the ordered disjoint union $\sqcup_{<}$
  instead of the disjoint union, but in our case the elements of the
  individual structures in the disjoint union are definable anyway.
  If we put these two facts together, we see that for every $q\ge 0$
  there are $p,m$ such that for all $k,n$ the structure
  $A\in\mathcal{B}$ with parts $M,N$ of sizes $m$, $n$, respectively,
  is indistinguishable from the structure $A'$ with parts of sizes
  $m+k\cdot p$ and $n$. We can easily choose $k$ and $n$ in such a way
  that $A\in\mathcal{B}$ and $A'\not\in\mathcal{B}$.

It remains to prove that $\mathcal{B}$ is definable in $\BPFO$. Consider the
sentence
\[
\varphi_{\text{inj}} := \forall x \forall y \Big(x\dot{=}y \vee Px \vee Py \vee \exists z \big(Pz \wedge \neg (Rxz \leftrightarrow Ryz)\big)\Big),
\]
which states that the random binary relation $R$, considered as a
function
\[
f: N \to \mathop\mathrm{Pow}(M),
\quad
x \mapsto \{ y \in M \st Rxy \}
\]
from $N$ to subsets of $M$, is injective. By the definition of $R$,
the function $f$ is drawn uniformly from the set of all such
functions. If we fix $\betrag{N}$, the probability for $f$ to be
injective increases monotonically with $\betrag{M}$. Furthermore,
for every structure in $\mathcal{B}'$, the size of $N$ and $M$ are a
power of two and an even number, respectively. Thus either
\[
2^{\betrag{M}} \leq \frac{1}{4}\betrag{N}^2
\quad
\text{or}
\quad
2^{\betrag{M}} \geq \betrag{N}^2,
\]
and this factor of $4$ translates into a probability gap for
$\varphi_{\text{inj}}$ in all sufficiently large structures in
$\mathcal{B}'$, by Lemma~\ref{lem:birthday} with $\epsilon_1 =
0.2$, $\epsilon_2 = 0.5$ and $c = 4$. The remaining
finitely many structures in $\mathcal{B}'$ can be dealt with
separately.
\end{proof}

\subsection{$\RFO$ is stronger than $\FO$ on additive structures}

Recall that an additive structure is one whose vocabulary
contains a ternary relation $+$, such that $A|_{+}$ is
isomorphic to $([0,\betrag{A}-1], \{(a,b,c) \st a+b=c\} )$.

\begin{thm}\label{theo:add}
  There is a class $\mathcal A$ of additive structures that is definable in
  $\RFO$ and $\logic{co-RFO}$, but not in $\FO$.
\end{thm}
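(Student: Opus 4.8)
The plan is to construct a class $\mathcal A$ of additive structures carrying a random relation that, with high probability, lets first-order logic ``count modulo two'' a quantity it provably cannot count on its own. The natural choice for such a quantity is the parity of a set whose size is not bounded by any constant; since we are on additive structures, the universe is identified with an initial segment $[0,n-1]$ of the natural numbers, so I would take the target query to be something like ``the structure contains a distinguished unary predicate $P$ whose cardinality is even'' (possibly restricted to a subclass where $P$ is forced to be an initial segment or otherwise combinatorially constrained, so that the non-$\FO$ direction is clean). Parity of a linearly ordered set is a textbook example of a non-$\FO$-definable query, and this will survive on additive structures by a standard Ehrenfeucht--Fra\"iss\'e argument (or by appeal to the fact that $\FO$ on additive structures captures only $\FO[+]$-uniform $\logic{AC}^0$, which cannot compute parity by Furst--Saxe--Sipser/H\aa stad); either way, the lower bound is the easy half here.

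The heart of the argument is the $\RFO$-upper bound, i.e.\ defining this parity query with a one-sided-error probabilistic first-order sentence over a random expansion. First I would introduce a random relation $R$ and reinterpret it, using the addition relation, as supplying a uniformly random object of the right shape --- e.g.\ a random function from $[n]$ to $[n]$, or a random subset, or a random ``hash value'' for each element --- exactly as in the $\BPFO$-constructions in the preceding subsections (the $\mathcal{TCFI}$ proof and the Birthday Paradox proof both do this). The key point is that addition allows $\FO$ to do arithmetic on the random bits that plain $\FO$ cannot: with $+$ available one can, with high probability over $R$, construct in $\FO$ a pairing or an injection witnessing parity. Concretely, I expect the construction to pair up the elements of $P$ via a random perfect-matching-like relation and then check in $\FO$ that the matching is total on $P$ (so $|P|$ is even) --- or, dually, to use the additive structure to check that a random linear ordering of $P$ has a well-defined ``last'' block, whichever is cleaner. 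Either way the probability of failure is a vanishing function of $n$ (a union bound of the Birthday-Paradox type), and on the finitely many small structures one can hard-code the answer; crucially, whenever the random witness visibly fails, the sentence can be made to invariably reject (giving $\RFO$) or invariably accept (giving $\logic{co-RFO}$), since the failure event is itself $\FO$-checkable. This yields the $(0,\beta]$-gap required for $\RFO$, and by Lemma~1 (probability amplification) the exact constant is irrelevant.

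The main obstacle I anticipate is getting $\FO$ with $+$ (but crucially \emph{without} $\times$, and without a general ordering-independent counting mechanism) to actually perform the parity computation on the random bits: first-order logic over $[0,n-1]$ with only addition is weak, and naive approaches to ``add up the random bits mod $2$'' need iterated sums, which $\FO[+]$ does not have. So the construction must be arranged so that parity is read off in \emph{one shot} from a cleverly chosen random object --- for instance, the existence of a total random matching on $P$ (a purely $\forall\exists$ statement once $R$ encodes a candidate matching), or the existence of an element that the random data marks as the unique ``midpoint'', rather than by any summation. Designing the random encoding so that (a) the relevant event has probability $1/2 \pm o(1)$ depending \emph{only} on the parity, (b) the success condition is genuinely first-order using just $+$ and three variables, and (c) the $o(1)$ error bound follows from an elementary tail estimate, is the delicate part; I would expect the Birthday-Paradox lemma (Lemma~\ref{lem:birthday}) or a direct union bound to supply (c) once (a) and (b) are in place. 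The remaining verifications --- that $\mathcal A$ is $\FO$-definable as a class of additive structures, and that the sentence has the claimed one-sided gap on all sufficiently large members --- are routine and parallel the earlier proofs in Section~\ref{sec:separation}.
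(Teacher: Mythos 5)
Your high-level strategy (parity of a unary predicate as the target, a random expansion that lets $\FO$ with addition compute it, one-sided error via an $\FO$-checkable failure event) matches the paper's, but the two points you explicitly leave open are exactly where the proof lives, and the resolutions you sketch do not work. For the upper bound, a uniformly random binary relation restricted to $P\times P$ is essentially never a partial matching, so there is nothing whose ``totality on $P$'' you could check in $\FO$; and ``the random graph on $P$ contains a perfect matching'' is neither $\FO$-expressible nor sensitive to the parity of $|P|$ in the way you need. The paper's actual mechanism is different: it restricts attention to \emph{sparse} sets $Q$ (at most one element in each interval $\{n,\dots,3n\}$), which forces $|Q|\le\log_3(\max Q)+1$ and hence $2^{|Q|}\le \max Q/(2\ln\max Q)$; by a coupon-collector bound, with high probability \emph{every} subset of $Q$ is realised as the $R$-neighbourhood of some element of the universe, and this saturation condition is itself $\FO$-checkable (giving the one-sided error). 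Once saturation holds, $\FO$ can simulate monadic second-order quantification over $Q$, and parity of a linearly ordered set is $\MSO$-definable. Nothing in your proposal identifies the need to shrink $Q$ to logarithmic size, and without that the ``all subsets are represented'' trick is unavailable.

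That restriction in turn destroys your proposed lower bound: parity of $O(\log n)$ many bits \emph{is} computable in $\logic{AC}^0$ (by a polynomial-size DNF), so the Furst--Saxe--Sipser/H\aa stad route says nothing about the restricted class, and an \EF\ argument over additive structures is far from routine. The paper instead invokes Lynch's theorem, which provides, for each quantifier rank $k$, an infinite set $A_k\subseteq\N$ such that any two sufficiently large finite subsets of $A_k$ yield $\FO$-indistinguishable additive structures at rank $k$, irrespective of the parity of their cardinalities; the class $\mathcal A$ must be built from such sets. Without this (or an equivalent external input), the non-definability half of your argument has no proof.
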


Our proof uses the following result:
\begin{thm}[Lynch \cite{lyn82}]
For every $k \in \N$ there is an infinite set $A_k \subseteq \N$ and
a $d_k \in \N$
such that for all finite $Q_0,Q_1 \subseteq A_k$ with $\betrag{Q_0} =
\betrag{Q_1}$ or $\betrag{Q_0}, \betrag{Q_1} > d_k$ the structures
$(\N,{+},Q_0)$ and $(\N,{+},Q_1)$ satisfy exactly the same
$\FO$-sentences of quantifier rank at most $k$.
\end{thm}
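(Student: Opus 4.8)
The plan is to prove the theorem by means of \EF{} games. To establish that $(\N,{+},Q_0)$ and $(\N,{+},Q_1)$ satisfy the same $\FO$-sentences of quantifier rank at most $k$, it suffices to exhibit a winning strategy for Duplicator in the $k$-round \EF{} game between the two structures, and the entire construction of $A_k$ and $d_k$ will be engineered so that such a strategy exists. Throughout, the additive reducts of the two structures are literally identical (both are $(\N,{+})$), so all the work lies in reconciling the additive type of the played tuples with the constraint of preserving the unary predicate.

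First I would fix a bound $B = B(k)$ (of order $2^k$) large enough to dominate every coefficient and modulus reachable by rank-$k$ formulas over $(\N,{+})$, set $M := \mathrm{lcm}(1,2,\dots,B)$, and construct $A_k$ as a single rapidly growing sequence $a_1 < a_2 < \cdots$ with two properties: (a) $a_i \equiv a_j \pmod m$ for every $i,j$ and every $m \les B$ (achieved by taking all $a_i$ in one residue class modulo $M$), and (b) super-increasing growth $a_{i+1} > B \cdot a_i$. A concrete choice such as $a_i := M\cdot(B+1)^i$ works and is plainly infinite. Property (a) makes any two elements of $A_k$ carry identical residues modulo every small number; property (b) ensures that no nontrivial integer combination $\sum c_\ell a_\ell$ with $\sum \betrag{c_\ell} \les B$ vanishes or collides with another, that all successive gaps are saturated (indistinguishable by rank-$k$ formulas from arbitrarily larger gaps), and that the only small combinations landing back inside $A_k$ are the single generators themselves. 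I then set $d_k := 2^k$, the usual counting threshold for a $k$-round game.

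The heart of the argument is the game invariant together with the single-round case analysis. After $j$ rounds, with played tuples $\bar u$ and $\bar v$, the invariant asserts that the induced additive–residue types agree: for every coefficient vector $\bar c$ with $\sum \betrag{c_\ell} \les 2^{k-j}$, the combinations $\sum c_\ell u_\ell$ and $\sum c_\ell v_\ell$ share the same sign (hence the same vanishing pattern, which is exactly what preserves the relation $+$), the same residue modulo every $m \les B$, and the same saturated order type; moreover corresponding played elements agree on $Q$. When Spoiler picks a non-$Q$ element or an addition witness, Duplicator answers by the classical strategy for the purely additive game, which preserves this type because it depends only on residues and saturated gaps. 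The new case is when Spoiler picks an element of $Q$: Duplicator must answer with a $Q$-element that respects both the additive type and the $Q$-pattern. Since every element of $A_k$ has the same residues and the same saturated local behaviour, any order-respecting fresh $Q$-element is additively correct, so only its order position must be matched. If $\betrag{Q_0}=\betrag{Q_1}$, the order isomorphism of the two $Q$-sets supplies it; if instead $\betrag{Q_0},\betrag{Q_1} > d_k$, the excess guarantees that in each block delimited by previously played $Q$-elements each side retains at least as many unplayed $Q$-elements as there are remaining rounds, so the position can always be matched—the standard large-set counting argument, transplanted to the $Q$-elements of $A_k$.

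The step I expect to be the main obstacle is controlling the interaction between $Q$ and addition. A priori the relation $+$ could let a rank-$k$ formula form a sum such as $a_i + a_j$ that lands inside $Q$, thereby distinguishing configurations that order alone could not resolve. The super-increasing, common-residue construction is precisely what forbids this: because the only small combinations of played elements that can fall inside $A_k$ are the single generators, $Q$-membership is never exposed through a nontrivial sum, and the additive type of a $Q$-element is fixed by its (constant) residue and its saturated position. Verifying this no-collision property rigorously, and checking that the type invariant is genuinely preserved under every addition witness Spoiler may demand, is the technical core; once it is in place, the counting argument for the $d_k$-regime and the residue bookkeeping are routine.
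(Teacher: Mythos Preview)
The paper does not prove this theorem: it is quoted from Lynch and used as a black box to derive Corollary~\ref{cor:lynchcor} and then Theorem~\ref{theo:add}. There is therefore no proof in the paper to compare your proposal against.

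That said, your plan is a faithful reconstruction of the kind of \EF{}-game argument behind Lynch's result: build $A_k$ as a super-increasing sequence inside a single residue class modulo $M=\mathrm{lcm}(1,\dots,B)$, so that any two elements of $A_k$ carry the same rank-$k$ additive type, and then run the standard Presburger game on the additive reduct while using the hypothesis on $\betrag{Q_0},\betrag{Q_1}$ to match the unary predicate via the usual counting argument. One point deserves more care than your sketch gives it. You correctly identify the no-collision property for small integer combinations of \emph{played $A_k$-elements} as the technical core, but there is a second interaction to rule out: Spoiler may play an element $u\notin A_k$ that nonetheless lies in the residue class $0\pmod M$, and you must check that the classical additive strategy never forces Duplicator's reply $v$ to land in $A_k$ (and hence possibly in $Q_1$) when $u\notin Q_0$. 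The super-increasing growth gives Duplicator enough slack to sidestep $A_k$ in this case, but this is a separate verification from the combination-of-generators argument you describe, and it should be made explicit in the invariant.
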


Here $(\N,{+},Q_i)$ denotes a $\{{+},P\}$-structure with ternary ${+}$ and
unary $P$, where ${+}$ is interpreted as above and $P$ is interpreted by $Q_i$.
For a finite set $M \subseteq \N$ we denote by $\max M$ the maximum
element of $M$. By relativising quantifiers to the maximum element
satisfying $P$, we immediately get the following corollary:
\begin{cor}
\label{cor:lynchcor}
Let $k$, $A_k$, $d_k$, $Q_0$ and $Q_1$ be as above. Then the
(finite) structures
$([0,\max Q_0],{+},Q_0)$ and $([0,\max Q_1],{+},Q_1)$ satisfy exactly the same
$\FO$-sentences of quantifier rank at most $k$.
\end{cor}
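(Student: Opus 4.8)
The plan is to derive the statement for finite structures from Lynch's theorem about the infinite structures $(\N,{+},Q_i)$ by a syntactic relativisation. The key observation is that in $(\N,{+},Q_i)$ the set of elements that are $\leq\max Q_i$ is exactly $[0,\max Q_i]$, and the substructure of $(\N,{+},Q_i)$ induced on $[0,\max Q_i]$ is literally the finite structure $([0,\max Q_i],{+},Q_i)$: the vocabulary is purely relational, so there are no closure issues; the ternary addition relation of the finite structure is the restriction of $+$ by the global convention fixing $+$ on additive structures (whose universe is exactly $[0,|A|-1]$); and $P$ is interpreted by $Q_i$ in both. So it suffices to transport a given sentence across this relativisation.

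Concretely, I would fix $k$ and an $\FO[\{{+},P\}]$-sentence $\varphi$ of quantifier rank at most $k$, and write down the formula $\theta(x):=\exists y\,\big(Py\wedge\exists z\,(x+z=y)\big)$, which expresses ``$x\leq\max P$'' whenever $P$ is interpreted by a nonempty set (and $Q_i\neq\emptyset$ is implicit in writing $[0,\max Q_i]$). Let $\varphi^\theta$ be the relativisation of $\varphi$ to $\theta$, obtained by replacing each $\exists x\,\psi$ by $\exists x\,(\theta(x)\wedge\psi^\theta)$ and each $\forall x\,\psi$ by $\forall x\,(\theta(x)\to\psi^\theta)$. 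By the standard relativisation lemma for relational structures, for $i\in\{0,1\}$ we have $(\N,{+},Q_i)\models\varphi^\theta$ if and only if the substructure of $(\N,{+},Q_i)$ induced on the set defined by $\theta$, namely $[0,\max Q_i]$, satisfies $\varphi$, i.e.\ if and only if $([0,\max Q_i],{+},Q_i)\models\varphi$.

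Applying Lynch's theorem to the sentence $\varphi^\theta$ then gives $(\N,{+},Q_0)\models\varphi^\theta\iff(\N,{+},Q_1)\models\varphi^\theta$, and chaining the equivalences above yields $([0,\max Q_0],{+},Q_0)\models\varphi\iff([0,\max Q_1],{+},Q_1)\models\varphi$, as required. The one point that needs care — and essentially the only obstacle — is the quantifier-rank bookkeeping: relativising to $\theta$ raises the quantifier rank by $\mathrm{qr}(\theta)\leq 2$, so $\varphi^\theta$ has quantifier rank at most $k+2$ rather than $k$. This is harmless because Lynch's theorem is available for every natural number: one simply applies it with the parameter $k+2$ and takes the sets $A_k$ and $d_k$ in the statement of the corollary to be Lynch's $A_{k+2}$ and $d_{k+2}$, a relabelling that is invisible in the way the corollary is used afterwards. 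The remaining verifications — that $\theta$ really defines $[0,\max P]$, the relativisation lemma in the purely relational setting, and the identification of the induced substructure with the prescribed finite additive structure — are entirely routine.
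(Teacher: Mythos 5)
Your proof is correct and follows essentially the same route as the paper, which disposes of the corollary in one line by ``relativising quantifiers to the maximum element satisfying $P$''. Your explicit treatment of the quantifier-rank increase caused by relativisation (absorbing it by invoking Lynch's theorem at parameter $k+2$ and relabelling $A_k$, $d_k$) is a detail the paper glosses over, and you handle it correctly.
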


We call a set $Q \subseteq \N$ \emph{sparse} if
$
\betrag{Q \cap \{n,\ldots,3n\}} \leq 1
$
for all $n \geq 0$. Note that if $Q$ is sparse and finite, then
$
\betrag{Q} \leq \log_3(\max Q) +1
$.
It is easy to see that there is an $\FO[\{{+},P\}]$-sentence $\varphi_\text{sparse}$ such that
\[
([0,\max Q],{+},Q) \models \varphi_\text{sparse}
\quad\Leftrightarrow\quad
Q\text{ is sparse}
\]
for all finite $Q \subseteq \N$.

\begin{proof}[Proof of Theorem~\ref{theo:add}]
  We define the following class of additive $\{+,P\}$-structures:
\[
\mathcal A=\{ ([0,\max Q],{+},Q)
\st
Q\text{ is finite, sparse and }\betrag{Q}\text{ is even}
\},
\]
with ${+}$ defined as usual. It follows immediately from
Corollary~\ref{cor:lynchcor} that $\mathcal A$ is not
definable in $\FO$.

It remains to prove that $\mathcal A$ is definable in $\logic{(co-)RFO}$. We
consider a binary random relation $R$ on $\mathcal{Q} = ([0,\max Q],
{+},Q)$ for some finite $Q \subseteq \N$.

Each element $a \in [0,\max Q]$ defines a subset of $Q$, namely the set
of $b \in Q$ for which
$(a,b) \in R(\mathcal{Q})$
holds. If $Q$ is a sparse set, it has
\[
2^{\betrag{Q}} \leq 2^{\log_3(\max Q) + 1} \leq \frac{\max
  Q}{2\mathop{\mathrm{ln}}(\max Q)}
\]
many subsets, and by standard estimates on the coupon collector's
problem (see, e.g., \cite{motrag95}; or use a union-bound argument),
if $\max Q$ is large enough, with high probability every subset of $Q$
is defined by some element of $[0,\max Q]$. We may check in $\FO$
whether this is actually the case. If so, we use the random relation
$R$ and the linear order induced by $+$ to check whether $Q$ is
even. Otherwise we reject (accept) to get an $\RFO$-
($\logic{co-RFO}$-)sentence.
\end{proof}

\section{$\BPFO$ is contained in $\MSO$ on additive structures}
\label{sec:bpfovsmsoplus}

In this section, we prove our first and only nontrivial
derandomisation result.  It complements the result of
Section~\ref{sec:bpfo} by saying that, on additive structures, every
$\BPFO$-sentence is equivalent to an $\MSO$-sentence.

\begin{thm}
  \label{thm:bpfoismsoonadd}
  Let $\tau$ be a finite relational vocabulary containing a ternary
  relation $+$ and let $\varphi$ be a $\BPFO[\tau]$-sentence. Then
  there exists an $\MSO$-sentence $\psi$ such that on additive
  structures $A$
  \[
  A \models \varphi \quad \Leftrightarrow \quad A \models \psi.
  \]
\end{thm}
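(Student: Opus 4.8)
The plan is to derandomise $\BPFO$ on additive structures by combining Lautemann's proof of the Sipser--Gács Theorem with the fact that, on additive structures, $\FO$ can manipulate polynomially many ``small'' numbers and perform the bitwise operations on the random bitstring that Lautemann's argument requires. First I would recall the setup: a $\BPFO[\tau]$-sentence $\varphi$ lives in $\FO[\tau\cup\rho]$ for some random vocabulary $\rho=\{R_1,\dots,R_k\}$ with $R_i$ of arity $r_i$, and a random expansion $X\in\CX(A,\rho)$ is encoded by a bitstring of length $N=\sum_i n^{r_i}$ where $n=|V(A)|$; since $A$ is additive, $N$ is polynomial in $n$ and, crucially, tuples of bounded length over $V(A)$ can be identified with numbers $<n^{O(1)}$ via the built-in addition (and the order it defines), so $\FO$ over an additive structure can address and reason about individual positions of this bitstring. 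By the amplification Lemma, we may assume the $(\alpha,\beta]$-gap of $\varphi$ is, say, $(2^{-n},1-2^{-n}]$, so the acceptance probability is either tiny or close to $1$.

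Next I would carry out the Lautemann/Sipser--Gács argument in this logical setting. Recall its two halves. If $\Pr_X(X\models\varphi)$ is close to $1$, then there exist $N$ ``shift vectors'' $s_1,\dots,s_N\in\{0,1\}^N$ such that every $r\in\{0,1\}^N$ is accepted under at least one shift $r\oplus s_j$; if the probability is tiny, no such family of shifts exists. This gives a $\Sigma^p_2$ description: $\exists s_1,\dots,s_N\,\forall r\,\bigvee_j \varphi(r\oplus s_j)$. In our context the shift vectors $s_j$ are just $N$ more random-style expansions, which in $\MSO$ are existentially quantified as monadic relations over the $n^{r_i}$-element sets of $r_i$-tuples — and here is where additive structure pays off: $\MSO$ can quantify over the needed tuple-indexed sets, the index $j$ ranging over $[N]$ is itself just an element (a number $<n^{O(1)}$) which $\MSO$ over an ordered/additive universe handles via set-encodings of numbers, and the universally quantified $r$ is another monadic relation. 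The inner ``accept under shift $r\oplus s_j$'' is an $\FO$ statement about $A$ together with the expansion defined pointwise by XOR-ing $r$ and $s_j$; since XOR of two $\FO$-definable bit-relations is $\FO$-definable, and plugging an $\FO$-definable expansion into the fixed $\FO$-sentence $\varphi$ stays $\FO$ (a syntactic substitution), the whole matrix is $\FO$, hence certainly $\MSO$. Thus $\varphi$ becomes $\exists\,(\text{monadic }s_1,\dots,s_N)\,\forall\,(\text{monadic }r)\,(\FO\text{-matrix})$, an $\MSO$-sentence $\psi$, and on additive structures $A\models\varphi \iff A\models\psi$.

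The main obstacle — and the reason this needs care — is handling the \emph{number} of shift vectors. Classically one needs roughly $N/\log N$ or $N$ shifts, which is polynomial but \emph{not constant}, so we cannot just write down a fixed finite block $\exists s_1\cdots\exists s_N$ in $\MSO$; instead the family $(s_j)_{j\in[N]}$ must be encoded as a \emph{single} relation $S$ of one higher arity (an extra coordinate for $j\in[N]$, where $[N]$ is represented inside the additive universe), and then ``$\forall r\,\exists j\le N\,:\,\varphi(r\oplus S_j)$'' must be expressed. Quantifying $\exists j\le N$ is fine — $j$ is an element/number — but substituting the $j$-th slice $S_j$ into $\varphi$ requires that the ``extract coordinate $j$'' operation be $\FO$-definable with $j$ as a free element variable, which again uses the coding of tuples as numbers and the availability of $+$; this is the one genuinely delicate bookkeeping step. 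Everything else is routine: the probabilistic estimates underlying the choice of the number of shifts are exactly as in Lautemann's proof (the amplified gap makes the union bounds go through for all sufficiently large $n$), and the finitely many small structures are handled separately as usual. I expect the write-up to spend most of its effort making precise the $\FO$-definability (over additive structures) of (i) the tuple-to-number coding, (ii) the XOR of two bit-relations, and (iii) the slice-extraction, after which the $\MSO$-sentence $\psi$ is assembled mechanically.
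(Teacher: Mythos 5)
There is a genuine gap, and it is the central one: your construction never leaves full second-order logic. A random expansion for a relation $R_i$ of arity $r_i\ge 2$ is an $n^{r_i}$-bit object, i.e.\ a subset of $V(A)^{r_i}$, and the family of polynomially many Lautemann shift vectors is an even larger object ``of one higher arity''. Quantifying over these is $\Sigma^1_2$ quantification over non-unary relations, which $\MSO$ by definition does not allow; a monadic set variable over an additive universe carries only $n$ bits (or, via the pairing of elements below $\sqrt n$, a binary relation on $[0,\sqrt n]$ --- still only $n$ bits), so neither a single shift vector of length $N=n^{O(1)}$ nor the universally quantified random string $r$ can be packed into set variables. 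Carried out as written, your argument reproduces exactly the paper's Observation that $\BPFO\leqq\Sigma^1_2$ (the direct Sipser--G\'acs/Lautemann translation), but not the theorem. The missing idea in the paper's proof is a preliminary \emph{randomness reduction}: a fixed $\FO$-sentence evaluated on random expansions of $A$ is a bounded-depth polynomial-size circuit in the random bits, so Nisan's pseudorandom generator for $\logic{AC}^0$ applies and cuts the number of random bits to $l=\log^{O(1)}n$; the paper shows that the generator itself (partial designs built from graphs of low-degree polynomials over a small prime field) is definable in $\MSO[+]$. Only after this compression do all the objects in Lautemann's argument --- the random seed, the $l$ shift vectors of length $l$ each, totalling $l^2=\log^{O(1)}n$ bits --- fit into single monadic set variables, and the $\exists\forall\exists$ pattern becomes an $\MSO$-sentence.

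A secondary problem is your amplification step. The probability-amplification lemma uses a \emph{fixed} number of independent copies of the random vocabulary, determined by the formula, so within $\BPFO$ you can only widen the gap to an arbitrary constant, not to $(2^{-n},1-2^{-n}]$. Lautemann's argument needs error below roughly the reciprocal of the number of random bits, which is non-constant; the paper achieves this only \emph{after} the Nisan reduction, by performing $l$ repetitions with a majority vote expressed inside $\MSO[+]$ (using blocks of one set variable and a matching to count the majority), which is feasible precisely because $l$ is polylogarithmic. Your items (i)--(iii) on tuple coding, XOR, and slice extraction are indeed $\FO$-definable over additive structures, but they do not address either of these two obstacles.
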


We first use Nisan's pseudorandom generator for constant depth
circuits \cite{nis91} to reduce the number of random bits to
$\log^{O(1)}n$; throughout this section, $n$ will denote the size of
the input structure. We then derandomise the resulting formula
following Lautemann's argument in \cite{lau83}. The second-order
quantifier depth of the resulting $\MSO$ formula does not depend on
the input formula $\varphi$.

In $\MSO[+]$, one can define a multiplication relation
(see~\cite[Lemma~5.4]{Schweikardt06}) and thus
quantify over pairs of elements in $[0,\sqrt{n}]$.
We only need the existence of such a pairing function, a
slightly weaker form of which is made precise in the following lemma:
\begin{lem}[Pairing Lemma]
\label{lem:msopairing}
There are $\MSO[+]$-formulas $\varphi_p(x)$ and
$\varphi_{\langle\cdot,\cdot,\cdot\rangle}(x,y,z,w)$ such that on additive
structures $A$
\begin{iteMize}{$\bullet$}
\item $\varphi_p(x)$ defines a number $p$ satisfying
  \[
  \frac{\sqrt{\betrag{A}}}{2} \leq p \leq \sqrt{\betrag{A}}.
  \]
  Moreover, $p$ is a prime number.
\item For every $b, c < p$ there is a unique $m$ such that
  $\varphi_{\langle\cdot,\cdot,\cdot\rangle}(0,b,c,m)$ is
  satisfied. Furthermore, for every $m$ there is a unique tuple
  $(a,b,c) \in [0,p-1]^3$ such that
  $\varphi_{\langle\cdot,\cdot,\cdot\rangle}(a,b,c,m)$ is satisfied.
  Henceforth we write $m = \langle a,b,c \rangle$ for this.
\end{iteMize}
\end{lem}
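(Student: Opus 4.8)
The plan is to build everything on top of the multiplication relation that is available in $\MSO[+]$ by~\cite[Lemma~5.4]{Schweikardt06}; call the corresponding $\MSO[+]$-formula $\varphi_\times(x,y,z)$ expressing $x\cdot y=z$. First I would define $\varphi_p(x)$. Using $\varphi_\times$ and the order (which is $\FO[+]$-definable), one can express in $\MSO[+]$ the statements ``$x$ is prime'' and ``$x^2\le \betrag A<(x+1)^2$'', i.e.\ ``$x=\lfloor\sqrt{\betrag A}\rfloor$''; here $\betrag A$ is referred to via the maximal element of the order, which is $\FO[+]$-definable. So it suffices to find, $\MSO[+]$-definably, a prime in the interval $\bigl[\tfrac12\sqrt{\betrag A},\sqrt{\betrag A}\bigr]$. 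By Bertrand's postulate there is always a prime between $\lfloor\tfrac12\sqrt{\betrag A}\rfloor$ and $2\lfloor\tfrac12\sqrt{\betrag A}\rfloor\le\sqrt{\betrag A}$ (for $\betrag A$ large enough; the finitely many small structures can be hard-wired into the formula), so I would let $\varphi_p(x)$ say ``$x$ is the least prime $\ge\lfloor\tfrac12\sqrt{\betrag A}\rfloor$'', which pins down a unique $p$ with $\tfrac12\sqrt{\betrag A}\le p\le\sqrt{\betrag A}$ and $p$ prime. Minimality is expressible by a first-order quantification over the (bounded) candidate range together with the $\MSO[+]$-definable primality test.

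Next I would define the tripling function. Given the prime $p=\varphi_p$, the natural candidate is $\langle a,b,c\rangle := a p^2 + b p + c$ for $a,b,c\in[0,p-1]$; since $p\le\sqrt{\betrag A}$ we have $p^3\le \betrag A^{3/2}$, which may exceed $\betrag A$, so a single element $m$ of the universe need not be able to hold $\langle a,b,c\rangle$ directly. The standard fix, which the lemma's phrasing already anticipates, is to represent $m$ not by one element but implicitly: $\varphi_{\langle\cdot,\cdot,\cdot\rangle}(a,b,c,w)$ should hold iff $w$ codes the tuple $(a,b,c)$ in a canonical way that $\MSO[+]$ can decode. Concretely I would use the two-level encoding $m=\langle a,b,c\rangle$ where we first pair $(b,c)\mapsto bp+c<p^2\le\betrag A$ (this fits in one element and is $\MSO[+]$-definable via $\varphi_\times$ and $+$), and then note the second bullet only requires: (i) for fixed $a=0$, the map $(b,c)\mapsto m$ is a well-defined function on $[0,p-1]^2$, and (ii) globally the map $(a,b,c)\mapsto m$ is injective with a decidable (in $\MSO[+]$) image. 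So it actually suffices to take $m=\langle a,b,c\rangle$ ranging over the set $\{ap^2+bp+c\}$ but represented by its residue-and-quotient data; alternatively, and more cleanly, I would have $\varphi_{\langle\cdot,\cdot,\cdot\rangle}$ quantify existentially over an auxiliary element $u$ with $u=bp+c$ and relate $m$ to $a$ and $u$ by $m=a\cdot(\text{something}) + u$ only in the regime where this fits; since the intended use is quantification over pairs/triples in $[0,\sqrt n]$, the ranges are chosen precisely so that everything stays $\le\betrag A$.

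The division, remainder, ``$x=\lfloor\sqrt y\rfloor$'', primality, and least-element-with-a-property constructions are all routine first-order consequences of having $+$, $\times$ and $\le$, so the bulk of the write-up is bookkeeping. I expect the main obstacle to be the range arithmetic: making sure the chosen encoding of $\langle a,b,c\rangle$ never asks for a number larger than $\betrag A-1$, which forces the factor-$\tfrac12$ slack in the bound on $p$ and a careful choice of whether to code triples directly or via nested pairs. The only genuinely number-theoretic input is Bertrand's postulate, used once to guarantee a prime in $[\tfrac12\sqrt{\betrag A},\sqrt{\betrag A}]$; everything else is definability bookkeeping, and the finitely many small structures where the asymptotic estimates fail are absorbed into the formula by brute force as usual.
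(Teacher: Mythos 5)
Your plan is correct and essentially the same as the paper's proof: both obtain $p$ from a Bertrand's-postulate argument combined with an $\MSO[+]$-definable primality test (the paper takes the largest prime whose square still fits in the universe, you take the least prime above $\lfloor\tfrac12\sqrt{|A|}\rfloor$ --- a cosmetic difference), and both realise the tripling as the base-$p$ pairing $m=bp+c<p^2\le|A|$ together with a quotient $a=\lfloor m/p^2\rfloor\in\{0,1,2,3\}$ absorbing the overflow, exactly the point your ``range arithmetic'' worry resolves since $p\ge\sqrt{|A|}/2$ forces $m<4p^2$. The paper builds divisibility and prime powers directly from sets of multiples rather than invoking the full multiplication relation of~\cite{Schweikardt06}, but this changes nothing of substance.
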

\begin{proof}
In $\MSO[+]$, we may define a formulas $\varphi_{X = \langle x
  \rangle}(X,x)$ and $\varphi_{\mathrm{divides}}(x,y)$ stating that
$X$ is the set of multiples of $x$ and $x$ divides $y$,
respectively. We may thus check whether $x$ is a prime
number. Furthermore, we may define the set of powers of a prime
number $x$: It is the largest set containing only numbers whose only
prime divisor is $x$.

Then $p$ is the largest prime number whose set of powers contains at
least one element other that $0$ and itself. Any number $m \in [0,p^2-1]$
may be written as $m = bp + c$ with $b,c \in [0,p-1]$. Both $b$ and $c$
are definable in $\MSO[+]$; notice that $b$ is the largest divisor of
$m - c$ smaller than $p$, or $0$ if $m < p$. For $m \geq p^2$ we
define $m = \langle a, b, c \rangle$ with $a \in \{1,2,3\}$ and
$m-ap^2 = \langle 0,b,c\rangle$.
\end{proof}

Whenever we write $p$ in this section, we mean the $p$ defined by the
$\varphi_p$ above. The Pairing Lemma allows us to quantify over
binary relations on $[0,p-1] \cong \GFp$. In particular, we may define
addition and multiplication modulo $p$, i.e., there are
$\MSO[+]$-formulas $\varphi_+(x,y,z)$ and $\varphi_\times(x,y,z)$ such
that for $a,b,c \in \GFp$,
\[
A \models \varphi_+(a,b,c) \quad\Leftrightarrow\quad a + b \equiv c \pmod{p}
\]
and
\[
A \models \varphi_\times(a,b,c) \quad\Leftrightarrow\quad a\cdot b
\equiv c \pmod{p}.
\]

For the proof of Theorem~\ref{thm:bpfoismsoonadd} we may assume that
the $\BPFO$-sentence $\varphi$ contains only one random relation, say
$R$ of arity $r$.  In fact, using the formulas $\varphi_{i\text{-th}}$
defining the $i$-th element of an additive structure
(cf. section~\ref{sec:sandq}) we may pack several random relations
$R_1,\ldots,R_k$ of arities $r_1,\ldots,r_k$ into one random relation
$R$ of arity $r = 1+\max\{r_1,\ldots,r_k\}$ by replacing every
occurrence of $R_ix_1\ldots x_{r_i}$ by
\[
\exists y\,(\varphi_{i\text{-th}}(y)\wedge R\underbrace{y\ldots y}_{(r-r_i)\text{
    times}}x_1\ldots x_{r_i}).
\]

We first apply a result by Nisan \cite{nis91} to
reduce the number of random bits:
\begin{lem}
\label{lem:nisanmso}
For every $r, d \in \N$ and $\epsilon > 0$ there are $n_0 \in \N$ and
$\MSO[+]$-formulas $\varphi_l(x)$ and
$\varphi_{\mathrm{prg}}(S,x_1,\ldots,x_r)$, where $S$ is a set
variable, such that in every additive structure $A$ of size $n > n_0$,
\begin{iteMize}{$\bullet$}
\item $\varphi_l$ defines a number $l \leq \log^{O(1)} n$ and

\item if $\varphi$ is an $\FO[\tau \cup \{R\}]$-sentence of quantifier
  rank $\leq d$, where $\tau$ is some finite relational vocabulary and $R$
  is of arity $r$, then
  \[
  \betrag{
    \Pr_{X \in \CX(A,\{R\})}(X \models \varphi)
    -
    \Pr_{S \subseteq [l]}( A \models \varphi'(S) )
  } < \epsilon,
  \]
  where $\varphi'$ is the $\MSO[+]$-formula obtained from $\varphi$ by
  replacing every occurrence of $R\vec x$ by
  $\varphi_{\mathrm{prg}}(S,\vec x)$.
\end{iteMize}
\end{lem}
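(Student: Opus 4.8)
The plan is to invoke Nisan's pseudorandom generator against constant-depth circuits and then re-express the generator inside $\MSO[+]$ using the pairing machinery already established. Recall that an $\FO$-sentence $\varphi$ of quantifier rank $d$ over $\tau\cup\{R\}$, evaluated on a fixed additive structure $A$ of size $n$ with $R$ replaced by a random relation, is computed by a circuit over the bits of $R(X)$ of size $n^{O(d)}$ and depth $O(d)$: the circuit has one layer of gates per quantifier, each an OR or AND of fan-in $n$, and the leaves read the bits encoding $R(X)$ together with the (fixed) bits encoding the $\tau$-part of $A$. So for each fixed $A$ the map $X\mapsto[X\models\varphi]$ is an $\mathrm{AC}^0$ function of the $N:=n^r$ input bits, with size and depth bounded in terms of $d$ only. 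By Nisan's theorem \cite{nis91}, there is a pseudorandom generator $G:\{0,1\}^{\ell}\to\{0,1\}^{N}$ with seed length $\ell=\log^{O(1)}N=\log^{O(1)}n$ that fools every such circuit to within $\epsilon$; moreover $G$ is explicitly computable, being built from a small family of pairwise-independent (or more precisely, $t$-wise independent) hash functions iterated a bounded number of times.

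The first main step is therefore to fix the combinatorial description of Nisan's generator in a form amenable to logic. Nisan's construction uses a bounded number $t$ of levels, and at each level an input segment is hashed down via functions of the form $x\mapsto ax+b$ over $\GFp$ (or over $\GFq$ for a prime power $q$); the seed is simply the list of coefficients $(a_i,b_i)$, a string of length $O(\log n)$. I would take $\ell$ to be (a constant times) $\lceil\log_2 p\rceil$, so that a subset $S\subseteq[\ell]$ encodes exactly the bits of the seed, i.e.\ a tuple of elements of $\GFp$. The target bit indexed by a tuple of universe elements $x_1,\dots,x_r$ is then obtained by packing $(x_1,\dots,x_r)$ into a single index $m<p^{O(1)}$ via the pairing function $\langle\cdot,\dots,\cdot\rangle$ of Lemma~\ref{lem:msopairing}, feeding $m$ through the $t$ levels of affine hashing, and reading off the resulting bit from the seed. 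Since $t$ is a fixed constant, and each hashing level is just one application of $\varphi_+$ and $\varphi_\times$ modulo $p$, the whole computation $\text{(seed $S$, index $m$)}\mapsto\text{bit}$ is a fixed $\MSO[+]$-formula $\varphi_{\mathrm{prg}}(S,x_1,\dots,x_r)$: quantify over the $O(1)$ intermediate values in $\GFp$, assert at each step the defining equation of the affine map using $\varphi_+,\varphi_\times$, and finally test membership of the final index in $S$. The formula $\varphi_l(x)$ defining $l$ is obtained from $\varphi_p$ together with the (definable) relation ``$x$ is the number of bits of $p$''.

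The second step is to verify correctness. By construction, for $S$ chosen uniformly, the relation $X_S$ on $V(A)^r$ given by $\varphi_{\mathrm{prg}}(S,\cdot)$ is distributed as $G(S)$ under the canonical bijection between $r$-ary relations and bitstrings of length $N$; since $X\mapsto[X\models\varphi]$ is one of the $\mathrm{AC}^0$ functions Nisan's generator fools, we get $\bigl|\Pr_{X}(X\models\varphi)-\Pr_{S}(A\models\varphi'(S))\bigr|<\epsilon$ for all $n$ above the threshold $n_0$ where the pairing function and the generator become available. One subtlety: $\varphi'$ is formed by textual substitution of $\varphi_{\mathrm{prg}}(S,\vec x)$ for $R\vec x$, so $R$ may appear under quantifiers and negations — but this is exactly the syntactic operation that mirrors plugging $G(S)$ into the circuit, so no semantic issue arises; the quantifier rank bound $\leq d$ on $\varphi$ is what pins down the circuit depth and hence which of Nisan's parameter settings to use.

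The main obstacle I anticipate is purely bookkeeping rather than conceptual: checking that Nisan's generator, which is usually stated for Boolean circuits and sometimes with field size a power of two, can be instantiated over the prime field $\GFp$ supplied by the Pairing Lemma (or, if one insists on $\GFtwo$-arithmetic, that $\MSO[+]$ can still carry out the needed field operations on $O(\log n)$-bit objects), and that all the index arithmetic — packing $(x_1,\dots,x_r)$, the bounded sequence of affine maps, the final bit extraction — stays within the $[0,p-1]\cong\GFp$ range where Lemma~\ref{lem:msopairing} guarantees definability. Since the generator has only $O(1)$ levels and each level is a single field-arithmetic step, the resulting $\MSO[+]$-formula has bounded quantifier rank independent of $\varphi$, which is the extra feature recorded in the lemma and which will be used in the subsequent Lautemann-style derandomisation; I would simply remark on this at the end rather than tracking the exact constant.
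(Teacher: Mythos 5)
Your high-level plan coincides with the paper's: fix $A$, observe that $X\mapsto[X\models\varphi]$ is computed by a depth-$O(d)$, size-$n^{O(d)}$ circuit over the $n^r$ bits of $R$, invoke Nisan's pseudorandom generator for constant-depth circuits, and then define the generator in $\MSO[+]$ using the pairing machinery. However, there is a genuine gap in the middle: the generator you describe is not Nisan's generator for $\logic{AC}^0$. A ``bounded number $t$ of levels of affine hashing $x\mapsto ax+b$ over $\GFp$ with a seed of length $O(\log n)$'' is the architecture of Nisan's \emph{space-bounded} generator; moreover, a constant number of such levels yields only $O(1)$-wise independence, which does not fool constant-depth circuits (and an $O(\log n)$-seed PRG for $\logic{AC}^0$ is not what \cite{nis91} provides --- note also the internal tension with your earlier, correct statement that $\ell=\log^{O(1)}n$). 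Nisan's $\logic{AC}^0$ generator instead outputs, for each of the $n^r$ positions, the \emph{parity} of the seed restricted to a set $A_i\subseteq[l]$ of size $m=(\log n)^{d+3}$, where the $A_i$ form a partial design with pairwise intersections at most $\log n$; the design sets are the graphs $\{(\xi,f(\xi))\}$ of degree-$(\log n)$ polynomials $f$ over a field $\GFm$ of polylogarithmic size, with $l=m^2$.

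Because the object is different, your definability sketch does not address the steps that actually carry the weight in the paper's proof: (i) defining the field $\GFm$ with $m=\mathrm{polylog}(n)$ and packing the index tuple $(x_1,\dots,x_r)$ into the coefficient vector of a polynomial over $\GFm$; (ii) defining \emph{evaluation of a degree-$(\log n)$ polynomial} over $\GFm$ --- a sum of $\log n$ many products, which cannot be handled by ``$O(1)$ intermediate values'' and a single use of $\varphi_+,\varphi_\times$, and requires second-order quantification; and (iii) expressing the parity of the polylog-sized set $S\cap\psi(A;\vec z)$, which needs $\MSO$ counting modulo two rather than a single membership test of ``the final index in $S$''. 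These are precisely the points where the paper does real work (the formulas $\varphi_{\mathrm{pack}}$, $\varphi_{\mathrm{eval}}$, and the parity formula ${\varphi'\negthinspace}_{\mathrm{prg}}$), so as written the proposal would not go through: the described generator lacks the required pseudorandomness, and the claimed $\MSO[+]$-implementation of the correct generator is missing.
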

\begin{proof}
  For any fixed structure $A$ of size $n$ we may construct a
  polynomial-sized circuit $C_{\varphi,A}$ of depth $\leq d$ which
  describes the behaviour of $\varphi$ on $(\tau \cup
  \{R\})$-extensions of $A$. The circuit has $n^r$ inputs indexed by
  the elements of $V(A)^r$, and an input vector
  $\vec x$ denotes the $(\tau \cup \{R\})$-extension
  $B_{\vec x}$ of $A$ given by
  \[
  \vec a \in R(B_{\vec x}) \quad\text{iff}\quad
  x_{\vec a} = 1.
  \]
  Then $C_{\varphi,A}(\vec x)$ evaluates to $1$ iff $B_{\vec x}
  \models \varphi$.

  Nisan \cite{nis91} gave a pseudorandom generator for such circuits
  which hinges on the following lemma:
  \begin{lem}[restated from~{\cite[Lemma~2.2]{nis91}}]
    \label{lem:nisanlem}
    Let $\{C_n\}$ be a family of circuits of depth $d$ and polynomial
    size, let $m = m(n) = (\log n)^{d+3}$, $l = l(n)$ and suppose for
    each $n$ the sets $A^{(n)}_1,\ldots,A^{(n)}_n \subseteq
    [l]$ satisfy
    \begin{iteMize}{$\bullet$}
    \item $\betrag{A^{(n)}_i} = m$ for all $1 \leq i \leq n$ and
    \item $\betrag{A^{(n)}_i \cap A^{(n)}_j} \leq \log n$ for all $1
      \leq i \not = j \leq n$.
    \end{iteMize}
    Then
    \[
    \betrag{ \Pr(C_n(\vec x) = 0) - \Pr(C_n(\oplus_{i \in
        A_1}y_i,\ldots,\oplus_{i \in A_n}y_i) = 0)} \leq
    \frac{1}{n^c}
    \]
    for any $c \in \N$ and large enough $n$. Here, the first
    probability is taken uniformly over all strings $\vec x \in
    \{0,1\}^n$, whereas the second is taken uniformly over all strings
    $\vec y \in \{0,1\}^l$.
  \end{lem}
  The resulting pseudorandom generator is depicted in
  Figure~\ref{fig:nisanprg}. Families of sets $A^{(n)}_i$ satisfying
  the above conditions are called \emph{partial-$(\log
    n,m)$-designs}. Nisan gives a construction with $l = m^2 =
  \log^{O(1)} n$, which drastically reduces the size of the
  probability space, i.e., the number of random bits needed. We now
  show how his construction can be defined in $\MSO[+]$.

  \begin{figure}
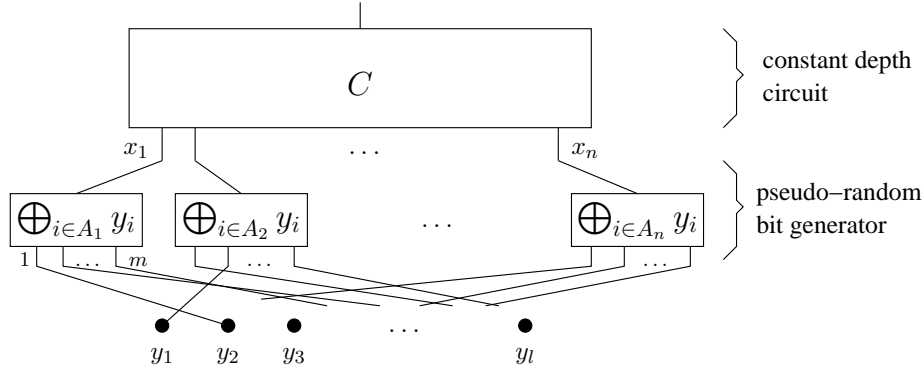

    \label{fig:nisanprg}
    \begin{center}
      \resizebox{0.8\textwidth}{!}{\input figs/nisanprg.pstex_t}
    \end{center}
    \caption{Nisan's pseudo-random bit generator. The sets $A_i \subseteq
      \{1,\ldots,l\}$ form a partial-$(\log n,m)$-design, i.e., they
      satisfy $\betrag{A_i} = m$ and $\betrag{A_i \cap A_j} \leq \log
      n$ for all $1 \leq i\not = j \leq n$.}
  \end{figure}

  On $[0,p-1]$, we may define a formula $\varphi_{\log}(x,y)$
  which is satisfied iff $x = \lceil \log_2 y \rceil$.  Using this and
  the fact that 
  \[
  2\lceil \log p \rceil - 1 \leq
  \lceil \log n \rceil \leq
  2\lceil \log p \rceil + 2,
  \]
  we let $\varphi_m(x)$ and $\varphi_l(x)$ be two formulas defining
  natural numbers $m$ and $l$ such that
  \begin{iteMize}{$\bullet$}
  \item $m$ is a prime number between $(r^2 \lceil \log n
    \rceil)^{d+3}$ and $2(r^2 (\lceil \log n \rceil + 3)^{d+3}$
  \item $l = m^2$
  \end{iteMize}

  Using the pairing function $\varphi_{\langle \cdot,\cdot,\cdot
    \rangle}$ we may assume that $R$ is a $3r$-ary relation which we
  only need to define for elements in $\GFp$. That is, we define
  $\varphi_{\mathrm{prg}}(S,x_1,\ldots,x_r)$ by
  \[
  \exists z_1 \cdots \exists z_{3r}
  \,
  x_1 = \langle z_1,z_2,z_3 \rangle \wedge
  \ldots
  \wedge x_r = \langle z_{3r-2},z_{3r-1},z_{3r} \rangle
  \wedge {\varphi'\negthinspace}_{\mathrm{prg}}(S,z_1,\ldots,z_{3r})
  \]
  The formula ${\varphi'\negthinspace}_{\mathrm{prg}}(S,\vec z)$ takes the parity of a
  subset of $S$ indexed by $\vec z$:
  \[
  {\varphi'\negthinspace}_{\mathrm{prg}}(S,\vec z) := \text{``}\betrag{S \cap \psi(A;\vec
    z)}\text{ is even}\text{''},
  \]
  where $\psi(x,\vec z)$ is an $\MSO[+]$-formula and $\psi(A;\vec z)
  := \{ x \st A \models \psi(x,\vec z) \}$; evenness may be expressed in $\MSO$ on
  ordered structures. By Lemma~\ref{lem:nisanlem}, we are done if we
  can define a formula $\psi(x,\vec z)$ such that
  \begin{enumerate}[(i)]
  \item $\psi(A;\vec z) \subseteq [l]$ for all $\vec z \in \GFp^{3r}$,
  \item $\betrag{\psi(A;\vec z)} = m$ for all $\vec z \in \GFp^{3r}$,
    and
  \item $\betrag{\psi(A;\vec z_1) \cap \psi(A;\vec z_2)} \leq \log n$ for
    all $\vec z_1 \not = \vec z_2 \in \GFp^{3r}$,
  \end{enumerate}
  which means the sets $\psi(A;\vec z)$ form a partial-$(\log n,
  m)$-design. We use the same construction as Nisan: We interpret the
  tuple $\vec z$ as a polynomial $f_{\vec z} \in \GFm[\xi]$ of degree
  $\leq \log n$. The set $\psi(A;\vec z)$ is then the graph of this
  polynomial, namely
  \[
  \psi(A;\vec z) = \{ (\xi,f_{\vec z}(\xi)) \st \xi \in \GFm \} \subseteq \GFm^2,
  \]
  and we identify $\GFm^2$ with $[l]$.
  We first encode the coefficients of $f_{\vec z}$
  into a set variable $X$ as follows: Consider the binary
  representations
  \[
  z_i = \sum_{j \geq 0} z_{i,j}2^j
  \quad\text{with }z_{i,j} \in \{0,1\}
  \]
  of the $z_i$. We can define an $\MSO[+]$-sentence
  $\varphi_{\mathrm{pack}}(\vec z,X)$ which holds iff $X$, interpreted
  as a binary relation over $\GFp$, holds exactly for pairs $(a,b)$
  with
  \[
  0 \leq a \leq \lceil \log p\rceil
  \quad\text{and}\quad
  b = \sum_{1 \leq i \leq 3r} z_{i,a}2^{i-1}.
  \]
  Thus for each $0 \leq a \leq \lceil \log p \rceil$ there is exactly one
  $b = b(a)$ with $(a,b) \in X$, and all $b$s are between $0$ and
  $2^{3r}$, and thus in $\GFm$ if $n$ is large enough. We may now
  define an $\MSO[+]$-sentence $\varphi_{\mathrm{eval}}(X,u,v)$ which,
  for these $X$s, holds iff
  \[
  v = f_{\vec y}(u) = \sum_{0 \leq a < \lceil \log p \rceil} b(a)u^a,
  \]
  with addition and multiplication according to $\GFm$. Putting these
  ingredients together, we define
  \[
  \psi(x,\vec z) = \exists X \exists u \exists v \,
  \text{``}0 \leq u,v < m\text{''}
  \wedge
  \varphi_{\mathrm{pack}}(\vec z, X)
  \wedge
  \varphi_{\mathrm{eval}}(X,u,v)
  \wedge
  \text{``}x = u\cdot m + v\text{''},
  \]
  which is easily verified to satisfy conditions (i) to (iii) above.
\end{proof}

So far we have reduced the number of random bits from $n^r$ to $l =
\log^{O(1)} n$, and these are conveniently packed into the first
$l$ bits of a single set variable $S$. We may now follow Lautemann's
proof \cite{lau83} to derandomise this sentence.
\begin{proof}[Proof of Theorem~\ref{thm:bpfoismsoonadd}]
  After applying Lemma~\ref{lem:nisanmso} we are left with
  $\MSO[+]$-sentences $\varphi_l$ and $\varphi'$ such that $\varphi_l$
  defines a number $l \leq \log^{O(1)} n$ and $\varphi'$ has a free
  set variable $S$.  We may assume that for all additive structures
  $A$,
  \begin{equation}
    \label{eqn:onelgap}
    \text{either }\Pr_{S \subseteq [l]}(A \models \varphi'(S)) < \frac{1}{l}
    \qquad
    \text{or }\Pr_{S \subseteq [l]}(A \models \varphi'(S)) > 1-\frac{1}{l},
  \end{equation}
  because otherwise we may use independent repetition and majority
  vote to obtain these bounds. To be precise, let $\chi(S,i,j)$ be
  defined by
  \[
  \chi(S,i,j) := (0 \leq i < l)\wedge (0 \leq j < l)\wedge \exists z
  (z \dot = i\cdot l + j \wedge Sz).
  \]
  That is, we divide the first $l^2$ bits of $S$ into $l$ blocks of
  $l$ bits each, and let $\chi(S,i,j)$ select the $i$-th bit of the
  $j$-th block. We replace each occurrence of $Sx$ in $\varphi'$ by
  $\chi(S,i,x)$ to obtain a formula ${\tilde\varphi}'(S,i)$. 
  Because $l$ is of order $\log^{O(1)} n$, we may quantify over pairs
  of elements of $[0,l-1]$, which allows us to express the formula
  \[
  \begin{split}
  {\bar\varphi}'(S) &= \text{``}{\tilde\varphi}'(S,i)\text{ holds for at
    least half of the }i \in [0,l-1]\text{''}
  \end{split}
  \]
  in $\MSO[+]$, e.g., by stating that there exists a matching $M$
  on $[0,l-1]$ such that
  \begin{iteMize}{$\bullet$}
  \item if $\{i,j\} \in M$, then exactly one of
    ${\tilde\varphi}'(S,i)$ and ${\tilde\varphi}'(S,j)$ holds and
  \item all $i \in [0,l-1]$ for which ${\tilde\varphi}'(S,i)$ does not
    hold are matched by $M$.
  \end{iteMize}
  Then ${\bar\varphi}'$ uses $l^2 = \log^{O(1)}n$ many bits of $S$, and by the
  Chernoff bound on the tails of the binomial distribution
  it satisfies~\eqref{eqn:onelgap}, even with $l$
  replaced by $l^2$ (details can be found in \cite[sec.~7.4]{ab09}).

  We identify subsets of $[l]$ with vectors in
  $\GFtwo^l$. Let $M \subseteq \GFtwo^l$ be the set of vectors for
  which $A \models \varphi'(S)$ holds. Equation~\eqref{eqn:onelgap}
  translates into
\[
\betrag{M} < \frac{\betrag{\GFtwo^l}}{l}
\quad\text{or}\quad
\betrag{M} > \left(1-\frac 1 l\right)\betrag{\GFtwo^l}.
\]
For a vector $\vec y \in \GFtwo^l$ we define
\[
\vec y \oplus M := \{ \vec x \oplus \vec y \st \vec x \in M \}
\]
to be the set $M$ translated by $\vec y$. We claim the following:
\begin{enumerate}[(a)]
\item If $\betrag{M} < \betrag{\GFtwo^l}/l$, then for every choice of
  vectors $\vec y_1,\ldots,\vec y_l$ we have
  \[
  \bigcup_{1 \leq i \leq l} (\vec y_i \oplus M) \not = \GFtwo^l.
  \]
\item If $\betrag{M} > (1-1/l)\betrag{\GFtwo^l}$, then there are
  vectors $\vec y_1,\ldots,\vec y_l$ such that
  \[
  \bigcup_{1 \leq i \leq l} (\vec y_i \oplus M) = \GFtwo^l.
  \]
\end{enumerate}
The first claim follows immediately from $\betrag{\vec y \oplus M} =
\betrag{M}$. For (b), assume that we randomly choose the
vectors $\vec y_i$ independently and uniformly from $\GFtwo^l$. For
any vector $\vec x \in \GFtwo^l$ we have
\[
\begin{split}
\Pr\left(\vec x \not\in  \bigcup (\vec y_i \oplus M)\right) 
&= \prod_i \Pr(\vec x \not\in \vec y_i \oplus M)
\\
&\leq \left(\frac 1 l \right)^l,
\end{split}
\]
by the independence of the $\vec y_i$. But then the expected number of
vectors \emph{not} in $\bigcup (\vec y_i \oplus M)$ is
\[
\begin{split}
\mathbbm{E}\left[ \betrag{\GFtwo^l \setminus \bigcup (\vec y_i
    \oplus M)} \right]
&= \sum_{\vec x \in \GFtwo^l} \Pr\left(\vec x \not\in  \bigcup
(\vec y_i \oplus M) \right) 
\\
&\leq \frac{\betrag{\GFtwo^l}}{l^l} = \left(\frac{2}{l}\right)^l < 1,
\end{split}
\]
so there must be a choice of $\vec y_i$s such that this number is
zero, i.e., $\bigcup (\vec y_i \oplus M) = \GFtwo^l$.

Again using the formula $\chi(S,i,j)$, we can pack the vectors $\vec
y_1,\ldots,\vec y_l$ into a single existentially quantified set
variable and check that $\bigcup (\vec y_i \oplus M) = \GFtwo^l$
as follows:
\[
\varphi'' = \exists Y \forall X \exists i\,
\varphi'(X \oplus \chi(Y,i,\cdot)),
\]
where $\varphi'(X \oplus \chi(Y,i,\cdot))$ is the formula $\varphi'(S)$
with every occurrence of $Sx$ replaced by
\[
(Xx \wedge \chi(Y,i,x)) \vee
(\neg Xx \wedge \neg\chi(Y,i,x)).
\]
Claims (a) and (b) imply that
\[
A \models \varphi''
\quad\Leftrightarrow\quad
\Pr(A \models \varphi'(S)) > 1-\frac 1 l,
\]
which completes the proof.
\end{proof}

\section{A logic capturing \BPP}

In this section, we prove that the logic $\BP\IFPC$ captures the
complexity class $\BPP$. Technically, the results of this section are
closely related to results in \cite{hkl96}.

Counting logics like $\FOC$ and $\IFPC$ are usually defined via
two-sorted structures, which are equipped with an initial segment of
the natural numbers of appropriate length. The expressive power of the
resulting logic turns out to be rather robust under changes in the
exact definition, see \cite{otto96} for a detailed survey of
this.
However, we will only need the limited counting ability provided by the
\emph{Rescher quantifier},
which goes back to a unary majority quantifier defined in
\cite{res62}, see \cite{otto96}.

We let $\FO(\Rescher)$ be the logic obtained from first-order logic by
adjoining a generalised quantifier $\Rescher$, the \emph{Rescher
  quantifier}. For any two formulas
$\varphi_1(\vec x)$ and $\varphi_2(\vec x)$, where $\vec x$ is a
$k$-tuple of variables, we form a
new formula
\[
\Rescher \vec x.\varphi_1(\vec x)\varphi_2(\vec x).
\]
Its semantics is defined by
\begin{multline}
A \models \Rescher\vec x.\varphi_1(\vec x)\varphi_2(\vec x)
\quad\text{ iff}
\\
\betrag{\{\vec{a}\in V(A)^k \st A \models
  \varphi_1[\vec{a}] \}}
\leq
\betrag{\{\vec{a}\in V(A)^k \st A \models
  \varphi_2[\vec{a}] \}}.
\end{multline}
The logic $\IFP(\Rescher)$ is defined similarly.

\begin{lem}\label{lem:ordlem}
  Let $R$ be a $6$-ary relation symbol.
  There is a formula $\phi_\le(x,y)\in\FO(\Rescher)[\{R\}]$ such that
  \[
  \lim_{n\to\infty}\Pr_{A\in X(S_n,\{R\})}\Big(\{(a,b)\bigmid
  A\models\phi_\le[a,b]\big\}\text{ is a linear order of }V(A)\Big)=1.
  \]
  (Recall that $S_n$ is the $\emptyset$-structure with universe
  $\{1,\ldots,n\}$. Thus $X(S_n,\{R\})$ just denotes the set of all
  $\{R\}$-structures with universe $\{1,\ldots,n\}$.)
\end{lem}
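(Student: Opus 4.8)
The plan is to use the random $6$-ary relation $R$ to assign a numerical ``weight'' to every element of the universe and to let $\phi_\le$ compare these weights via the Rescher quantifier. Writing $\vec z=(z_1,\dots,z_5)$, I would take
\[
\phi_\le(x,y)\;:=\;\Rescher\vec z.\,(Rx\vec z)(Ry\vec z),
\]
so that for a structure $A\in X(S_n,\{R\})$ and elements $a,b\in V(A)$ we have $A\models\phi_\le[a,b]$ if and only if $g(a)\le g(b)$, where
\[
g(a)\;:=\;\betrag{\{\vec z\in V(A)^5\st (a,\vec z)\in R(A)\}}.
\]

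The first observation is purely combinatorial: the relation $\{(a,b)\st g(a)\le g(b)\}$ is reflexive, transitive and total for any function $g$, and it is \emph{antisymmetric} — hence a linear order of $V(A)$ — precisely when $g$ is injective. So it suffices to show that $\Pr(g\text{ is injective})\to1$ as $n\to\infty$, where the probability is taken over $A\in X(S_n,\{R\})$.

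The second step is the anti-concentration estimate. For a fixed $a$ the $n^5$ events ``$(a,\vec z)\in R(A)$'' are independent and each has probability $\tfrac12$, so $g(a)$ is binomially distributed with parameters $n^5$ and $\tfrac12$; and for $a\ne b$ the random variables $g(a)$ and $g(b)$ are independent, since they are determined by disjoint blocks of the random bits describing $R$. Consequently
\[
\Pr\bigl(g(a)=g(b)\bigr)\;=\;\sum_{k}\Pr(g(a)=k)^2\;\le\;\max_k\binom{n^5}{k}2^{-n^5}\;=\;O(n^{-5/2})
\]
by Stirling's formula, and a union bound over the $\binom n2$ unordered pairs of distinct elements gives $\Pr(g\text{ not injective})=O(n^{2-5/2})=O(n^{-1/2})\to0$, which is exactly the claim.

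There is no deep obstacle here; the only point that genuinely matters is the choice of arity. The point probabilities of a binomial with $N$ trials are only $\Theta(N^{-1/2})$, so in order to absorb the $\Theta(n^2)$ terms of the union bound one needs $g(a)$ to be a sum of $\omega(n^4)$ independent coin flips, i.e. $R$ must have at least five coordinates besides the argument $x$ — hence arity at least $6$. With $R$ of arity $\le 5$ the naive argument yields only a constant (or divergent) bound, so the jump to arity $6$ is precisely what makes the union bound close. (Using several independent random relations instead would also work and would drive the error probability down even faster, but one $6$-ary relation already suffices.)
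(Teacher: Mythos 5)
Your proposal is correct and follows essentially the same route as the paper: the same formula $\Rescher\vec z.(Rx\vec z)(Ry\vec z)$, the same reduction to injectivity of the binomial weights $g(a)$, and the same collision bound $\Theta(n^{-5/2})$ (you bound $\sum_k\Pr(g(a)=k)^2$ by the maximal point probability, the paper evaluates it exactly as $2^{-2m}\binom{2m}{m}$ via Vandermonde — same order either way), followed by the identical union bound over $\binom{n}{2}$ pairs.
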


\begin{proof}
  We let
\[
\varphi_\leq(x,y) := \Rescher x_1\ldots x_5.Rx x_1\ldots x_5
\,
Ry x_1\ldots x_5.
\]
To see that $\phi_\le(x,y)$ defines an order with high probability, let $A$ be a
structure with universe $V(A)=\{1,\ldots,n\}$. For each $a\in V(A)$, let
\[
X_a := \betrag{ \{ \vec{a} \in V(A)^5 \st A \models
R a \vec{a}.
\} }
\]
Then $A \models \varphi_\leq(a,b)$ iff $X_a \leq X_b$,
and $\varphi_\leq$ linearly orders $A$ iff the $X_a$ are pairwise
distinct. But for $a \not = b \in V(A)$, the random variables $X_a$ and
$X_b$ are independent and each is binomially distributed with
parameters $p = 1/2$ and $m = n^5$, and thus
\[\begin{split}
\Pr(X_a = X_b) &= \sum_{k=0}^m \left(\frac{1}{2^m}\binom{m}{k}\right)^2
=\frac{1}{2^{2m}}\sum \binom{m}{k}^2
\\
&=\frac{1}{2^{2m}}\sum \binom{m}{k}\binom{m}{m-k}
=\frac{1}{2^{2m}}\binom{2m}{m} = \Theta\left(\frac{1}{\sqrt m}\right),
\end{split}\]
where the final approximation can be found, for example, in \cite{fellerI}.
The second part now follows by a union bound over the
$\binom{n}{2} = \Theta(m^{2/5})$ pairs $a \not = b$.
\end{proof}

\begin{thm}
The logic $\BPIFPJ$ captures \BPP.
\end{thm}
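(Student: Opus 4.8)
The plan is to prove both inclusions: every $\BPIFPJ$-definable Boolean query is decidable in $\BPP$, and every $\BPP$-decidable query is definable in $\BPIFPJ$. The first direction is routine: given a $\BPIFPJ[\tau]$-sentence $\varphi\in\IFP(\Rescher)[\tau\cup\rho]$ with an $(1/3,2/3]$-gap, a randomised polynomial-time machine on input (the encoding of) an ordered $\tau$-structure $A$ samples a random $(\tau\cup\rho)$-expansion $X$ of $A$ by flipping $\sum_i n^{r_i}$ coins, then evaluates the fixed-point-with-Rescher sentence $\varphi$ on $X$ in polynomial time — this is just the Immerman–Vardi evaluation extended by the trivial polynomial-time evaluation of the Rescher quantifier (count two definable sets, compare cardinalities). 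The acceptance probability equals $\Pr_{X}(X\models\varphi)$, which is $>2/3$ or $\le 1/3$ depending on whether $A\in\CQ_\varphi$, so the query lies in $\BPP$.

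For the converse, let $\CQ$ be a Boolean $\tau$-query decidable in $\BPP$, so $L(\CQ_\les)\in\BPP$ via some polynomial-time machine $M$ using $p(n)$ random bits. The key obstacle — and the only real content — is that $\BPIFP(\Rescher)$ has no built-in order, so we cannot directly invoke Immerman–Vardi. The remedy is exactly Lemma~\ref{lem:ordlem}: adjoin a fresh $6$-ary random relation symbol $R_{\text{ord}}$ to $\rho$ and let $\varphi_\le(x,y)$ be the $\FO(\Rescher)$-formula that, with probability tending to $1$, defines a linear order on the universe. On the event that $\varphi_\le$ is a linear order — call it the ``good event'', of probability $\ge 1-o(1)$ — the structure $(A,\varphi_\le)$ is (isomorphic to) an ordered structure, and by the Immerman–Vardi Theorem there is an $\IFP$-sentence $\psi_M$ over ordered $\tau$-structures equivalent, on ordered structures, to the behaviour of $M$'s underlying deterministic polynomial-time computation when additionally fed a random string. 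Concretely, introduce further random relation symbols encoding the $p(n)$-bit random string of $M$ (a single unary random relation of the right size suffices after using $\varphi_\le$ to index bit positions), and let $\psi$ be the $\IFP(\Rescher)$-sentence that: (i) checks whether $\varphi_\le$ is a linear order — if not, it takes a default truth value chosen to be the ``rejecting'' outcome, absorbed into the error probability; (ii) otherwise, relative to the order $\varphi_\le$, runs the Immerman–Vardi simulation of $M$ on the $\varphi_\le$-encoding of $A$ with the sampled random bits.

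The probability analysis then goes as follows. Conditioned on the good event, $\Pr(\psi\text{ holds})=\Pr_{r}(M\text{ accepts }(s(A,\varphi_\le),r))$, which is $\ge 2/3$ if $A\in\CQ$ and $\le 1/3$ if $A\notin\CQ$, by choice of $M$ (and isomorphism-invariance of $\CQ$, so the particular order $\varphi_\le$ produces does not matter). Unconditionally, $\Pr(X\models\psi)$ lies within $o(1)$ of this, so for all sufficiently large structures $\psi$ has, say, an $(2/5,3/5]$-gap. The finitely many small structures are handled by a finite case distinction (definable outright in $\FO(\Rescher)$, or simply by adjusting constants), and by the amplification lemma (Lemma preceding the ``First observations'' subsection, applicable since $\IFP(\Rescher)$ is closed under conjunction and disjunction) this upgrades to a genuine $(1/3,2/3]$-gap. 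Hence $\CQ$ is definable in $\BPIFP(\Rescher)$.

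\textbf{The main obstacle} I anticipate is getting the ``bad event'' handling clean: when $\varphi_\le$ fails to be a linear order, the $\IFP$-simulation is meaningless, so one must hard-wire a deterministic default (reject) and then argue that this only shifts probabilities by the $o(1)$ failure probability of Lemma~\ref{lem:ordlem} — the gap bounds $2/3,1/3$ have enough slack to absorb this for large $n$, but one must be careful that ``reject'' is the safe default on \emph{both} sides (it costs nothing on the $A\notin\CQ$ side and only $o(1)$ on the $A\in\CQ$ side). A secondary bookkeeping point is packing $M$'s $p(n)$ random bits into relations of $\rho$: one uses $\varphi_\le$ to define initial segments of the order (this is $\FO(\Rescher)$-definable once an order is available) and a unary random relation $S$, reading off its first $p(n)$ positions as the random string; $p(n)\le n^c$ fits since $|V(A)^1|=n$ may be too small, so instead one takes $S$ of arity $c$ and uses the order-induced tupling — entirely routine given Lemma~\ref{lem:ordlem}. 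No new ideas beyond the order-definability lemma and Immerman–Vardi are needed; the rest is the same bridge used for ordered structures, composed with a high-probability random order.
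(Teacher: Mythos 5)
Your proposal is correct and follows essentially the same route as the paper: the forward inclusion by direct simulation of the random relations with random bits, and the converse by composing the Immerman--Vardi Theorem with the high-probability random order of Lemma~\ref{lem:ordlem}. The paper's own proof is terser and leaves implicit the points you elaborate (absorbing the $o(1)$ failure of the order-definition into the gap, packing the machine's random bits into random relations), but these are exactly the details the paper's argument relies on, so there is no substantive difference.
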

\begin{proof}
$\BPIFPJ$ is contained in \BPP, because
a randomised polynomial time algorithm can interpret the random relations by using its random
bits.

For the other direction, let $\mathcal{Q}$ be a Boolean query in
\BPP. This means that there is a randomised polynomial time algorithm
$M$ that decides the query $\mathcal{Q}_\leq$ of ordered expansions of
structures in $\mathcal{Q}$. We may view the (polynomially many)
random bits used by $M$ as part of the input. Then it follows from the
Immerman-Vardi Theorem that there is a $\BPIFP$-sentence $\psi_M$
defining $\mathcal{Q}_\leq$. Note that, by the definition of
$\mathcal{Q}_\leq$, this sentence is order-invariant.  We replace
every occurrence of $\leq$ in $\psi_M$ by the formula
$\varphi_\leq(x,y)$ of Lemma~\ref{lem:ordlem}, which with high
probability defines a linear order on the universe.
\end{proof}

It is easy to see that $\BP\IFPC$ is also contained in $\BPP$ and that
$\IFP(\Rescher)\leqq\IFPC$. Thus we
get the following corollary.

\begin{cor}
  $\BP\IFPC=\BP\IFP(\Rescher)$, and both capture $\BPP$. %
\end{cor}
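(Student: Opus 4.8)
The plan is to prove the corollary by combining the theorem just established (that $\BPIFPJ$ captures $\BPP$) with two easy containments and a routine conversion between the Rescher quantifier and explicit counting. First I would observe that $\BPIFP(\Rescher)\leqq\BP\IFPC$: this is immediate from $\IFP(\Rescher)\leqq\IFPC$, since the $\BP$-construction is monotone in the underlying logic (if $\LL_1\leqq\LL_2$ then every $(\alpha,\beta]$-gap formula of $\LL_1[\tau\cup\rho]$ is, after translating it into an equivalent $\LL_2$-formula, a gap formula of $\LL_2[\tau\cup\rho]$ defining the same query, hence lies in $\P_{(\alpha,\beta]}\LL_2$). The inclusion $\IFP(\Rescher)\leqq\IFPC$ itself is the standard fact that a single Rescher comparison $\Rescher\vec x.\varphi_1\varphi_2$ is expressible in $\IFPC$ by counting the two definable sets with the counting terms of $\IFPC$ and comparing the resulting number terms under the built-in order on the number sort; this is sketched in the sentence preceding the corollary.

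Next I would argue the reverse containment at the level of expressive power, $\BP\IFPC\leqq\BPIFP(\Rescher)$, by showing $\BP\IFPC$ is contained in $\BPP$. For this, note that $\IFPC$ is contained in $\PTIME$ (Immerman--Vardi together with the fact that counting is polynomial-time computable), so any $\BP\IFPC$-sentence $\phi\in\IFPC[\tau\cup\rho]$ with an $(\alpha,\beta]$-gap for, say, $\alpha=1/3$, $\beta=2/3$ defines a query decidable in $\BPP$: on input an ordered $\tau$-structure, the algorithm samples a uniformly random $\rho$-expansion (a bitstring of the length computed in Section~3), then evaluates $\phi$ on the resulting $(\tau\cup\rho)$-structure in polynomial time; the gap guarantees bounded error. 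Hence every Boolean query definable in $\BP\IFPC$ is in $\BPP$, and by the theorem it is therefore definable in $\BPIFP(\Rescher)$. Combining the two directions gives $\BP\IFPC\equiv\BPIFP(\Rescher)$ on Boolean queries, and since the theorem is stated for Boolean queries and both logics capture $\BPP$, the corollary follows.

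I expect the only genuinely delicate point to be the direction $\BP\IFPC\leqq\BPIFP(\Rescher)$: unlike the syntactic inclusion $\IFP(\Rescher)\leqq\IFPC$, here one cannot translate formula-by-formula, because $\BPIFP(\Rescher)$ is order-invariant by construction (its formulas must define genuine queries, preserved under isomorphism) whereas the natural simulation of a $\BP\IFPC$-sentence runs through an order-dependent $\BPP$ algorithm. The resolution — exactly as in the proof of the theorem — is to go through $\BPP$ and then reinvoke Lemma~\ref{lem:ordlem}: one takes the order-invariant $\BPIFP$-sentence $\psi_M$ produced by Immerman--Vardi for the $\BPP$ algorithm deciding the query, and replaces $\leq$ by $\varphi_\leq(x,y)$, which the Rescher quantifier defines to be a linear order with probability tending to $1$; a minor point to check is that the small failure probability of $\varphi_\leq$ can be absorbed into the error budget by the amplification argument of the lemma following the definition of $\BP\LL$, so that the composite sentence still has the required $(\alpha,\beta]$-gap. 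Everything else is bookkeeping.
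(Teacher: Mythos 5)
Your proposal is correct and follows essentially the same route as the paper: the inclusion $\BP\IFP(\Rescher)\leqq\BP\IFPC$ via the syntactic translation of the Rescher quantifier into counting, the containment of $\BP\IFPC$ in $\BPP$ by sampling a random expansion and evaluating in polynomial time, and then closing the loop through the preceding capturing theorem for $\BP\IFP(\Rescher)$. The paper dispatches all of this in one sentence ("It is easy to see that $\BP\IFPC$ is also contained in $\BPP$ and that $\IFP(\Rescher)\leqq\IFPC$"), so your write-up is just a more explicit version of the intended argument, including the correct observation that the reverse inclusion must detour through $\BPP$ and the random order of Lemma~\ref{lem:ordlem} rather than proceed formula-by-formula.
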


\begin{rem}
  Lemma~\ref{lem:ordlem} also implies that
  $\BP\Linf(\Rescher)\equiv\BP\Cinf$, because, in the presence of an
  ordering, a quantifier of the form $\exists^{\geq n}x\,\varphi$ may
  be spelled out as
  \[
  \bigvee_{\substack{S \subset \N\\ \betrag{S}=n}} \bigwedge_{i \in S}
  \exists x\,(\varphi_{i\text{-th}}(x)\wedge \varphi(x)),
  \]
  where $\varphi_{i\text{-th}}(x)$ defines $i$-th element
  in the linear order (cf. section~\ref{sec:sandq}).

  In fact, because the formulas $\varphi_i$ use only three distinct
  variables independent of $i$, \emph{any} query is definable in
  $\Linf$ on ordered structures, as well as on $\BP\Cinf$.
\end{rem}

\section{Summary and Open Problems}

Our main motivation for introducing randomised logics was to apply
tools from finite model theory to problems in computational complexity
theory, and possibly vice versa. Because most capturing results from
descriptive complexity remain valid when both the logic and the
complexity class they involve are randomised in the same way, our
definitions are indeed suitable for this purpose. In particular, the
capturing results by Barrington et al.~\cite{bis90} for
$\FO[+,\times]$ and Behle and Lange~\cite{bl06} for $\FO[\leq]$ and
$\FO[+]$ fall into this category.

This asks for a more detailed investigation of the expressive power of
randomised logics. For example, we have shown that $\BPFO[+]$ can not
be derandomised, while conditional derandomisation results for
dlogtime-uniform $\logic{BPAC}^0$ (cf.~\cite{vio04}) suggest that
$\BPFO[+,\times]$ might be derandomisable. As this question seems to
elude currect techniques, a first step might be to find \emph{some}
relation $R$ for which $\BPFO[R]$ is derandomisable. Note that
derandomisability of non-uniform $\logic{BPAC}^0$ implies the
existence of an infinite sequence $(R_i)_{i \geq 1}$ of relations for
which $\BPFO[R_1,R_2,\ldots]$ is derandomisable.

One obstruction to proving results about randomised logics is that,
for example, \EF~games become quite complicated on structures with
both a random and a non-random part. In~\cite{csl2011}, the first
author proves some non-definability results for $\BPFO$, namely that,
on vocabularies with only unary relations, $\BPFO$ can be
derandomised, and that the ordering relation $\leq$ can not be defined
in $\BPFO$ from its corresponding successor relation. A natural next
step would be to prove whether $\BPFO$ can be derandomised on word
models or not.

\section*{Acknowledgements}
We would like to thank Nicole Schweikardt and Dieter van Melkebeek for
helpful comments on an earlier version of this paper.

\bibliographystyle{plain}
\bibliography{rlogicbib}

\begin{thebibliography}{10}

\bibitem{abenor84}
Miklos Ajtai and Michael Ben-Or.
\newblock A theorem on probabilistic constant depth computations.
\newblock In {\em Proceedings of the sixteenth annual ACM symposium on Theory
  of computing}, STOC, pages 471--474, New York, NY, USA, 1984. ACM.

\bibitem{ab09}
Sanjeev Arora and Boaz Barak.
\newblock {\em Computational Complexity}.
\newblock Cambridge University Press, 2009.

\bibitem{bis90}
David A.~Mix Barrington, Neil Immerman, and Howard Straubing.
\newblock On uniformity within {NC$^1$}.
\newblock {\em J. Comput. Syst. Sci.}, 41(3):274--306, 1990.

\bibitem{barfef85}
J.~Barwise and S.~Feferman, editors.
\newblock {\em Model Theoretic Logics}.
\newblock Perspectives in Mathematical Logic. Springer-Verlage, 1985.

\bibitem{bl06}
Christoph Behle and Klaus-J{\"o}rn Lange.
\newblock {FO[$<$]}-uniformity.
\newblock In {\em IEEE Conference on Computational Complexity}, pages 183--189,
  2006.

\bibitem{cfi92}
J.-Y. Cai, M.~F{\"u}rer, and N.~Immerman.
\newblock An optimal lower bound on the number of variables for graph
  identifications.
\newblock {\em Combinatorica}, 12(4):389--410, 1992.

\bibitem{dhk95}
Anuj Dawar, Lauri Hella, and Phokion~G. Kolaitis.
\newblock Implicit definability and infinitary logic in finite model theory.
\newblock In {\em ICALP}, volume 944 of {\em LNCS}, pages 624--635. Springer
  Verlag, 1995.

\bibitem{ebb85}
H.-D. Ebbinghaus.
\newblock Extended logics: {T}he general framework.
\newblock In J.~Barwise and S.~Feferman, editors, {\em Model--Theoretic
  Logics}, pages 25--76. Springer-Verlag, 1985.

\bibitem{ebbflu95}
H.-D. Ebbinghaus and J.~Flum.
\newblock {\em Finite Model Theory}.
\newblock Perspectives in Mathematical Logic. Springer-Verlag, 2nd edition,
  1999.

\bibitem{csl2011}
Kord Eickmeyer.
\newblock Non-definability results for random first-order logic.
\newblock In {\em Computer Science Logic}, September 2011.

\bibitem{fag74}
R.~Fagin.
\newblock Generalized first-order spectra and polynomial-time recognizable
  sets.
\newblock In Richard~M. Karp, editor, {\em Complexity of Computation}, volume~7
  of {\em SIAM-AMS Proceedings}, pages 43--73, 1974.

\bibitem{fag76}
R.~Fagin.
\newblock Probabilities on finite models.
\newblock {\em Journal of Symbolic Logic}, 41:50--58, 1976.

\bibitem{fellerI}
W.~Feller.
\newblock {\em An Introduction to Probability Theory and Its Aplications},
  volume~I.
\newblock John Wiley \& Sons, 1957.

\bibitem{gleetal69}
Y.V. Glebski\u{\i}, D.I. Kogan, M.I. Liogon'ki\u{\i}, and V.A. Talanov.
\newblock Range and degree of realizability of formulas in the restricted
  predicate calculus.
\newblock {\em Kibernetika}, 2:17--28, 1969.
\newblock Englisch translation, Cybernetics 5:142--154,1969.

\bibitem{gklmsvvw07}
E.~Gr\"{a}del, P.G. Kolaitis, L.~Libkin, M.~Marx, J.~Spencer, M.Y. Vardi,
  Y.~Venema, and S.~Weinstein.
\newblock {\em Finite Model Theory and Its Applications}.
\newblock Texts in Theoretical Computer Science. Springer-Verlag, 2007.

\bibitem{gur88}
Y.~Gurevich.
\newblock Logic and the challenge of computer science.
\newblock In E.~B\"orger, editor, {\em Current trends in theoretical computer
  science}, pages 1--57. Computer Science Press, 1988.

\bibitem{hkl96}
L.~Hella, P.G. Kolaitis, and K.~Luosto.
\newblock Almost everywhere equivalence of logics in finite model theory.
\newblock {\em The Bulletin of Symbolic Logic}, 2(4):422--443, December 1996.

\bibitem{imm86}
N.~Immerman.
\newblock Relational queries computable in polynomial time.
\newblock {\em Information and Control}, 68:86--104, 1986.

\bibitem{imm99}
N.~Immerman.
\newblock {\em Descriptive Complexity Theory}.
\newblock Graduate Texts in Computer Science. Springer-Verlag, 1999.

\bibitem{impwid97}
R.~Impagliazzo and A.~Wigderson.
\newblock {P = BPP} if {E} requires exponential circuits: Derandomizing the
  {XOR} lemma.
\newblock In {\em Proceedings of the 29th ACM Symposium on Theory of
  Computing}, pages 220--229, 1997.

\bibitem{impag06}
Russell Impagliazzo.
\newblock Can every randomized algorithm be derandomized?
\newblock In {\em Proceedings of the thirty-eighth annual ACM symposium on
  Theory of computing}, STOC '06, pages 373--374, 2006.

\bibitem{kay02}
P.~Kaye.
\newblock A logical characterisation of the computational complexity class
  {BPP}.
\newblock Technical report, University of Waterloo, 2002.

\bibitem{kolvar92b}
P.~G. Kolaitis and M.~Y. Vardi.
\newblock Infinitary logics and 0-1 laws.
\newblock {\em Information and Computation}, 98:258--294, 1992.

\bibitem{lau83}
C.~Lautemann.
\newblock {BPP} and the polynomial hierarchy.
\newblock {\em Information Processing Letters}, 17(4):215--217, 1983.

\bibitem{lib04}
L.~Libkin.
\newblock {\em Elements of Finite Model Theory}.
\newblock Texts in Theoretical Computer Science. Spinger-Verlag, 2004.

\bibitem{lyn82}
J.F. Lynch.
\newblock On sets of relations definable by addition.
\newblock {\em Journal of Symbolic Logic}, 47(3):659--668, 1982.

\bibitem{makowski04}
J.~A. Makowski.
\newblock Algorithmic uses of the feferman-vaught theorem.
\newblock {\em Annals of Pure and Applied Logic}, 126(1-3):159--213, April
  2004.

\bibitem{mitmitsce98}
J.C. Mitchell, M.~Mitchell, and A.~Scedrov.
\newblock A linguistic characterization of bounded oracle computation and
  probabilistic polynomial time.
\newblock In {\em Proceedings of the 39th Annual IEEE Symposium on Foundations
  of Computer Science}, pages 725--733, 1998.

\bibitem{motrag95}
Rajeev Motwani and Prabhakar Raghavan.
\newblock {\em Randomized Algorithms}.
\newblock Cambridge University Press, 1995.

\bibitem{mueller08}
M.~M{\"u}ller.
\newblock Valiant-vazirani lemmata for various logics.
\newblock {\em Electronic Colloquium on Computational Complexity (ECCC)},
  15(063), 2008.

\bibitem{niswid94}
N.~Nisan and A.~Wigderson.
\newblock Hardness vs randomness.
\newblock {\em Journal of Computer and System Sciences}, 49:149--167, 1994.

\bibitem{nis91}
Noam Nisan.
\newblock Pseudorandom bits for constant depth circuits.
\newblock {\em Combinatorica}, 11(1):63--70, 1991.

\bibitem{otto96}
M.~Otto.
\newblock {\em Bounded Variable Logics and Counting}.
\newblock Lecture Notes in Logic. Springer-Verlag, 1996.

\bibitem{res62}
N.~Rescher.
\newblock Plurality quantification.
\newblock {\em Journal of Symbolic Logic}, 27(3):373--374, 1962.

\bibitem{Schweikardt06}
Nicole Schweikardt.
\newblock On the expressive power of monadic least fixed point logic.
\newblock {\em Theor. Comput. Sci.}, 350(2-3):325--344, 2006.

\bibitem{sto77}
L.~Stockmeyer.
\newblock The polynomial hierarchy.
\newblock {\em Theoretical Computer Science}, 3:1--22, 1977.

\bibitem{tod91}
S.~Toda.
\newblock {PP} is as hard as the polynomial-time hierarchy.
\newblock {\em SIAM Journal on Computing}, 20(5):865--877, 1991.

\bibitem{var82}
M.Y. Vardi.
\newblock The complexity of relational query languages.
\newblock In {\em Proceedings of the 14th ACM Symposium on Theory of
  Computing}, pages 137--146, 1982.

\bibitem{vio04}
Emanuele Viola.
\newblock The complexity of constructing pseudorandom generators from hard
  functions.
\newblock {\em Computational Complexity}, 13:147--188, 2004.

\end{thebibliography}

\end{document}